%% file: main.tex
\documentclass[a4paper,UKenglish,cleveref, autoref, thm-restate]{lipics-v2021}

\usepackage{amsmath,amsthm,amssymb}
\usepackage{booktabs}
\usepackage{xspace}
\usepackage{url}
\usepackage[table]{xcolor}
\usepackage{tikz}
\usepackage{makecell}

\newcommand{\xth}[1]{$#1^{\text{th}}$}

\newcommand{\bftab}{\fontseries{b}\selectfont}

\renewcommand{\qed}{\hfill$\blacksquare$}

\newcommand{\word}{\mathcal{\mathcal{W}}}
\newcommand{\lcp}{\ensuremath{lcp}\xspace}

\newcommand{\BWT}{\ensuremath{\mathrm{BWT}}\xspace}
\newcommand{\RLBWT}{\ensuremath{\mathrm{RLBWT}}\xspace}
\newcommand{\SA}{\ensuremath{\mathrm{SA}}\xspace}

\newcommand{\DA}{\ensuremath{\mathrm{DA}}\xspace}
\newcommand{\LF}{\ensuremath{\mathrm{LF}}\xspace}
\newcommand{\FL}{\ensuremath{\mathrm{FL}}\xspace}
\newcommand{\LCP}{\ensuremath{\mathrm{LCP}}\xspace}

\newcommand{\SPRED}{S_{rank(\pi)}}

\pdfoutput=1 
\hideLIPIcs  

\title{Bounding the Average Move Structure Query for Faster and Smaller RLBWT Permutations}

\titlerunning{Bounded Average Move Structures}

\author{Nathaniel K. {Brown}}{Department of Computer Science, Johns Hopkins University, Baltimore, MD, USA}{nbrown99@jh.edu}{https://orcid.org/0000-0002-6201-2301}{[NSERC PGS-D and NIH grant R01HG011392]}
\author{Ben Langmead}{Department of Computer Science, Johns Hopkins University, Baltimore, MD, USA}{blangme2@jhu.edu}{https://orcid.org/0000-0003-2437-1976}{[NIH grant R01HG011392]}

\authorrunning{N.K. Brown and B. Langmead}

\Copyright{Nathaniel K. Brown and Ben Langmead}

\ccsdesc{Theory of computation~Data compression}

\keywords{Move Structure, Burrows-Wheeler Transform, Permutation}

\category{} 

\supplement{}
\supplementdetails[subcategory={Results Source Code}, cite={}, swhid={}]{Software}{https://github.com/drnatebrown/orbit/releases/tag/bounding_average}

\acknowledgements{The authors thank Ahsan Sanaullah for discussion which motivated this work, and Mohsen Zakeri for helpful insight regarding the length capping proof.}

\nolinenumbers 

\EventEditors{Martin Aum\"{u}ller and Irene Finocchi}
\EventNoEds{2}
\EventLongTitle{24th Symposium on Experimental Algorithms (SEA 2026)}
\EventShortTitle{SEA 2026}
\EventAcronym{SEA}
\EventYear{2026}
\EventDate{June 22--24, 2026}
\EventLocation{Copenhagen, Denmark}
\EventLogo{}
\SeriesVolume{371}
\ArticleNo{24}

\begin{document}

\maketitle

\begin{abstract}
The \textit{move structure} represents permutations with long contiguously permuted intervals in compressed space with optimal query time. They have become an important feature of compressed text indexes using space proportional to the number of Burrows-Wheeler Transform (\BWT) runs, often applied in genomics. This is in thanks not only to theoretical improvements over past approaches, but great cache efficiency and average case query time in practice. This is true even without using the worst case guarantees provided by the interval splitting \textit{balancing} of the original result. In this paper, we show that an even simpler type of splitting, \textit{length capping} by truncating long intervals, bounds the average move structure query time to optimal whilst obtaining a superior construction time than the traditional approach. This also proves constant query time when amortized over a full traversal of a single cycle permutation from an arbitrary starting position.

Such a scheme has surprising benefits both in theory and practice. For a move structure with $r$ runs over a domain $n$, we replace all $O(r \log n)$-bit components to reduce the overall representation by $O(r \log r)$-bits. The worst case query time is also improved to $O(\log \frac{n}{r})$ without balancing. An $O(r)$-time and space construction lets us apply the method to run-length encoded \BWT (\RLBWT) permutations such as $\LF$ and $\phi$ to obtain optimal-time algorithms for \BWT inversion and suffix array (\SA) enumeration in $O(r)$ working space. Finally, we introduce the \texttt{Orbit} library, providing flexible plug and play move structure support, and use it to evaluate our splitting approach. Experiments find length capping construction is faster and uses less memory than balancing, and results in faster move structure queries: up to $\sim 17$ times faster when compared to an unbalanced representation of $\phi$. We also see a space reduction in practice, with at least a $\sim 40\%$ disk size decrease for $\LF$ across large repetitive genomic collections when compared to a balanced/unbalanced move structure.
\end{abstract}


\section{Introduction}

Compressed indexes are an increasingly common tool to answer pattern-matching queries within a compressed-space memory budget.  Such indexes are used in genomics, where users often need to search for strings within large collections of related genome sequences (``pangenomes''). Such collections are highly repetitive, leading to high compression when applying the Burrows-Wheeler Transform (\BWT). As such, many indexes use the BWT or the run-length encoded BWT (\RLBWT) as a foundation for their data structures and algorithms.

The move structure of Nishimoto and Tabei~\cite{nishimoto2021optimal} was recently proposed as a data structure to support ``runny'' permutations consisting of long contiguously permuted intervals in optimal query-time. Since \RLBWT based indexes rely on such permutations, namely the \LF and $\phi$ functions, they can be used to achieve a compressed full-text index that achieves both constant-time queries and $O(r)$-space.  This was a theoretical improvement over the related $r$-index of Gagie, Navarro, and Prezza~\cite{rindex} which uses sparse bitvectors and logarithmic predecessor queries.  Besides the theoretical improvement, other groups noted a major improvement in the move structure's computational efficiency, chiefly due to it's favorable locality of reference~\cite{brown2022rlbwt, zakeri2024movi}.  

However, while excellent worst-case time and space bounds are achievable by ``balancing'' the structure in the manner described by Nishimoto and Tabei~\cite{nishimoto2021optimal}, practical implementations of the move structure have tended to forego balancing and its associated worst-case guarantees, and instead rely on on its practical average-case efficiency. To this end, we developed a simpler splitting procedure called ``length capping'' which lends insight to the practical benefits of the move structure while finally bounding the average query in theory.

Given a permutation $\pi$ over domain $n$ with $r$ contiguously permuted runs (or intervals), we prove that by splitting intervals of length greater than a constant factor of the average length $\frac{n}{r}$, a \textit{length capped} move structure is obtained that achieves constant-time per move query averaged over the $n$ distinct permutation steps in $O(r)$-space. Further, this can be done in $O(r)$-time in comparison to the $O(r \log r)$-time required for balancing~\cite{bertram2024move}.
This also allows us to replace all $O(r \log n)$ bit components of an arbitrary move structure with $O(r \log \frac{n}{r})$ bit representations. Overall, this saves $O(r \log r)$ bits of total space.
A consequence of length capping also bounds the worst case query to $O(\log \frac{n}{r})$ as an alternative to the existing $O(\log r)$-time of an unbalanced representation.\

Further, a length capped move structure provides constant-time queries amortized over $n$ consecutive steps from an arbitrary starting position when the permutation is formed from a single cycle. This is true for \RLBWT permutations $\LF$, $\FL$, $\phi$, and $\phi^{-1}$, allowing us to use amortized complexity alongside the improved construction time to prove optimal $O(n)$-time algorithms for \BWT inversion and suffix array (\SA) enumeration in $O(r)$ working space. This is owed to the unique properties of \RLBWT based permutations which allow efficient construction of an unbalanced move structure before applying length capping.

Finally, we introduce the \texttt{Orbit} library supporting extensive options to represent arbitrary move structures in a flexible plug and play format. Most importantly, this library implements length capping, allowing us to explore its practical benefits. We evaluate its use for \RLBWT permutations \LF and $\phi$ in a genomics setting using collections of human chromosome-19 haplotypes. We find that length capping results in faster average query times in practice, such as a $\sim17$ times faster representation for $\phi$ when compared to an unbalanced representation, and can be used alongside balancing for best results. Its construction when compared to balancing is faster and uses less memory. Further, \texttt{Orbit} achieves the best query time and space of observed methods; in practice, length capping results in at least $\sim40\%$ space reduction for $\LF$ when scaled across large repetitive collections when compared to balanced/unbalanced move structures.

\section{Preliminaries}

\subsection{Notation}

An array $A$ of $|A|=n$ elements is represented as $A[0..n-1]$. $\{A\}$ is the set derived from $A$ of size $|\{A\}|$. For $x \in \mathbb{N}$, let $[0,x]$ represent the array $[0,1,\dots,x]$. A \textit{predecessor query} on a set $B$ with element $x$ is defined inclusively such that $B.pred(x)= \max \{y\in B~|~y\leq x\}$. When dealing with a set of distinct elements, $B.rank(x)$ refers to the index of $B.pred(x)$ in sorted order. A \textit{string} $S[0..n-1]$ is an array of symbols drawn from an ordered alphabet $\Sigma$ of size $\sigma$. 
We use $S[i..j]$ to represent a \textit{substring} $S[i] \dots S[j]$. For strings $S_1$, $S_2$ let $S_1S_2$ represent their concatenation. A \textit{prefix} of $S$ is some substring $S[0..j-1]$, where a \textit{suffix} is some substring $S[i..n-1]$. The length of the \textit{longest common prefix} (\lcp) of $S_1$,$S_2$ is defined as $\lcp(S_1,S_2)=\max \{j~|~S_1[0..j-1] = S_2[0..j-1]\}$. A \textit{text} $T[0..n-1]$ is assumed to be terminated by the special symbol $\$ \notin \Sigma$ of least order so that suffix comparisons are well defined. We use the RAM word model, assuming machines words of size $\word=\Theta(\log n)$ with basic arithmetic and logical bit operations in $O(1)$-time.

\subsection{Suffix Array and Burrows-Wheeler Transform}

The \textit{suffix array} (\SA)~\cite{manber1993suffix} of a text $T[0..n-1]$ is a permutation $\SA[0..n-1]$ of $[0,n-1]$ such that $\SA[i]$ is the starting position of the $i$th lexicographically smallest suffix of $T$. 
Let a collection of documents $\mathcal{D}=\{T_1, T_2, \dots, T_d\}$ be represented by their concatenation $T[0..n-1]=T_1T_2\dots T_d\$$. The \textit{document array} (\DA)~\cite{muthukrishnan2002efficient} is defined as $\DA[0..n-1]$ where $\DA[i]$ stores which document $T[\SA[i]..n]$ begins in. We use $d$ to refer to the number of documents in a collection. The \textit{Burrows-Wheeler Transform} (\BWT)~\cite{bwt} is a permutation $\BWT[0..n-1]$ of $T[0..n-1]$ such that $\BWT[i] = T[\SA[i] - 1]$ if $\SA[i] > 0$, otherwise $\BWT[i]=T[n-1]=\$$. Let $r$ be the number of maximal equal character runs of the \BWT. The \textit{run-length encoded BWT} (\RLBWT) is an array $\RLBWT[0..r-1]$ of tuples where $\RLBWT[i].c$ is the character of the $i$th \BWT run and $\RLBWT.\ell$ its length. 

The \BWT is reversible by using the \textit{last-to-first} (\LF) mapping $\LF(i)$, a permutation over $[0,n-1]$ satisfying $\SA[\LF(i)]=(\SA[i]-1) \mod n$. The \textit{first-to-last} (\FL) mapping is its inverse such that $\FL(i)$ satisfies $\SA[\FL(i)]=(\SA[i]+1) \mod n$. The permutation $\phi$ over $[0,n-1]$ is defined such that $\phi(\SA[i])=\SA[(i-1) \mod n]$, returning the \SA value for the suffix of preceding lexicographic rank. Its inverse $\phi^{-1}$ is defined symmetrically such that $\phi^{-1}(\SA[i])=\SA[(i+1) \mod n]$. The \textit{longest common prefix array} (\LCP) stores the \lcp between lexicographically adjacent suffixes in $T[0..n-1]$, i.e., $\LCP[0..n-1]$ such that $\LCP[0]=0$ and $\LCP[i]=\lcp(T[\SA[i-1]..n-1], T[\SA[i]..n-1])$ for $i>0$. 


\subsection{Move Structure}
For a permutation $\pi$ over $[0,n-1]$, let $S$ be the set of $r$ positions $i$ such that either $i=0$ or $\pi(i-1) \neq \pi(i)-1$. Then given any $i\in[0,n-1]$, we can evaluate $\pi(i)$ using $i'=S.pred(i)$ as 
\begin{equation}
\pi(i)=\pi(i')+(i-i')
\label{eq:pred_map}
\end{equation}
and can do so in $O(r)$-space and $O(\log\log n)$-time by storing $S_\pi = \{\pi(i) \mid \forall i \in S\}$ alongside a compressed sparse bitvector representing $S$ supporting predecessor queries. Intuitively, for any $j \in [0,r-1]$, there exists a \textit{disjoint interval} from $S[j]$ with length $\ell = S[j+1]-S[j]$ (or $\ell=n-S[j]$ if $j=r-1$) such that the \xth{j} interval corresponds to $[S[j],S[j]+\ell)$ within which all positions permute contiguously.

Nishimoto and Tabei's \textit{move structure}~\cite{nishimoto2021optimal} is an alternative formulation in $O(r)$-space avoiding predecessor queries which, assuming we are given the rank of the predecessor for position $i$ in $S$, achieves constant time per step when computing consecutive permutations $\pi^k(i)$ for $k\geq1$ . Assume, for some $i$, that we know the rank $j'$ of its predecessor $i'=S.pred(i)$ in $S$. Then we compute $\pi(i)$ as in (\ref{eq:pred_map}) by accessing $i'=S[j']$ and $\pi(i')=S_\pi[j']$. However, to compute $\pi^2(i)=\pi\left(\pi(i)\right)$ we would need a new predecessor query to find the rank $j''$ of its predecessor $i''=S.pred(\pi(i))$ in $S$. If we also store the position of predecessors of $S_\pi$ in $S$, $\SPRED = \{S.rank(\pi(i)) \mid \forall i \in S\}$, then we could scan down starting from $\SPRED[j']$ in $S$ until we find the largest rank $j''$ such that $S[j''] \leq \pi(i)$. Then, we compute $\pi^2(i)$ as we did before using accesses with $j''$, and continue similarly for any $\pi^k(i)$ with $k > 1$. A single iteration of this procedure is called a \textit{move query}. Figure~\ref{fig:example} shows the relationship between a permutation and its move structure.

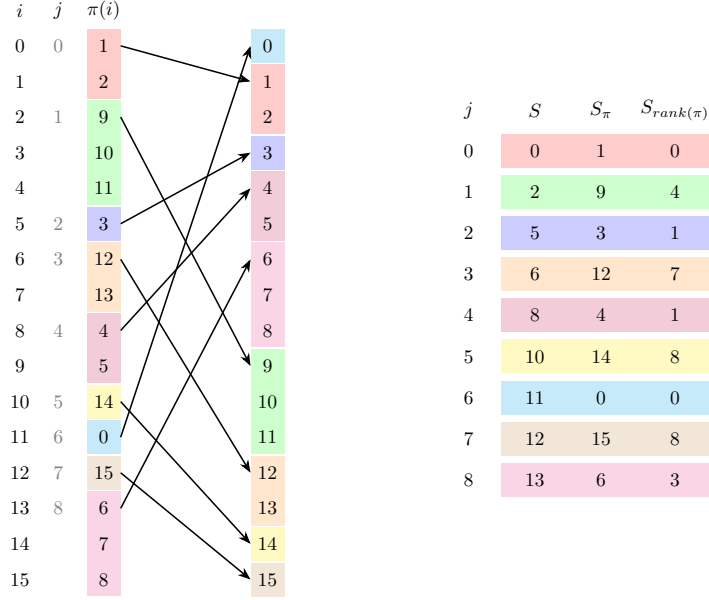
\begin{figure}
    \centering
    \resizebox{0.7\linewidth}{!}{
        \input{Figs/example.tex}
    }
    \caption{For an example permutation $\pi$, shows the corresponding relationship between its contiguously permuted runs and the move structure components $S$, $S_\pi$, and $\SPRED$.}
    \label{fig:example}
\end{figure}

The move query time is proportional to what Zakeri et al.~\cite{zakeri2024movi} call \textit{fast forwards}, the distance from $\SPRED[j']$ to $j''$. As described, there could be $O(r)$ fast forwards in the worst case. Brown, Gagie, and Rossi~\cite{brown2022rlbwt} showed that even over all distinct move queries, $\pi^n(i)$ for any $i$ when formed from a single cycle, the worst case total number of fast forwards is $\Theta(n\cdot r)$ and hence average $\Theta(r)$-time per distinct query. However, Nishimoto and Tabei~\cite{nishimoto2021optimal} proved in their original result that we can \textit{balance} the intervals by arbitrarily splitting some of them. Let $S'\supseteq S$ also contain the positions of the new intervals introduced by their balancing, with $|S'| = r'$. Then the number of fast forwards for any move query is $O(1)$, and $r' \in O(r)$. Brown, Gagie, and Rossi~\cite{brown2022rlbwt} showed how to parameterize balancing for a fixed parameter $\alpha$ into at most $r'=r+\frac{r}{\alpha-1}$ intervals and less than $2\alpha$ fast forwards for any step. Bertram, Fischer, and Nalbach~\cite{bertram2024move} gave a practical $O(r \log r)$-time and $O(r)$-space balancing algorithm for parameter $\alpha$. We call this version with theoretical guarantees a \textit{balanced move structure}, in contrast to the original \textit{unbalanced move structure}.

\begin{theorem}[Nishimoto and Tabei~\cite{nishimoto2021optimal}]
    Given a permutation $\pi$ over $[0,n-1]$, let $S$ be the set of $r$ positions such that either $i=0$ or $\pi(i-1) \neq \pi(i)-1$. We can construct in $O(r \log r)$-time and $O(r)$-space~\cite{bertram2024move} a balanced move structure of size $O(r)$ words, which, given position $i$ and the rank of its predecessor in $S$, computes a move query returning $\pi(i)$ and the rank of $\pi(i)$'s predecessor in $S$ in $O(1)$-time.
\label{thm:nt}
\end{theorem}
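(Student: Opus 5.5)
The plan is to recover the balancing argument of Nishimoto and Tabei in its parameterized form due to Brown, Gagie, and Rossi~\cite{brown2022rlbwt}, and then account separately for correctness, space, query time, and construction. Fix a constant $\alpha \geq 2$. Call an output interval $[S_\pi[j], S_\pi[j]+\ell_j)$ \emph{heavy} if it contains at least $2\alpha$ starting positions of input intervals from $S$. The balancing procedure repeats the following step while some heavy output interval exists. Take a heavy output interval $[S_\pi[j], S_\pi[j]+\ell_j)$, and let $p$ be the $\alpha$-th element of $S$ inside it. Split the \emph{input} interval $j$ at offset $d = p - S_\pi[j]$, adding $S[j]+d$ to $S'$. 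The output interval is thereby cut at $p$. Each half now contains fewer than the offending count. The split does not change $\pi$; it only adds an extra position at which equation~(\ref{eq:pred_map}) is anchored. Hence correctness of the move query is immediate for any $S' \supseteq S$.

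The query-time claim follows once no heavy output interval remains. Suppose we are given $i$ and the rank $j'$ of its predecessor in $S'$. The scan starts at $\SPRED[j']$, the rank of the predecessor of $S_\pi[j']$ in $S'$. It moves forward only through elements of $S'$ that lie inside output interval $j'$, since $\pi(i) < S_\pi[j'] + \ell_{j'}$. So the scan makes fewer than $2\alpha = O(1)$ fast forwards. We also store $S'$, $S'_\pi$, and $\SPRED$ as arrays of $r'$ words. Each query then costs $O(1)$ accesses plus the scan.

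The main obstacle is showing that the procedure terminates with $r' = O(r)$. Adding a split point to $S'$ creates a new input start, which may make some \emph{other} output interval heavy and cause a cascade of splits. I would use a potential/charging argument. Each split takes an output interval holding at least $2\alpha$ starts and cuts it so that the left part holds $\alpha$ starts. The remaining part holds at least $\alpha$, and each part loses at least $\alpha - 1$ relative to the whole. One new start is added elsewhere. So for each split, $\alpha$ existing starts become ``settled'' inside a non-heavy piece, while only one new start is introduced. Charging the new start to these $\alpha-1$ surplus starts gives a geometric series: the total number of starts satisfies $r' \leq r\sum_{k\ge 0}(\alpha-1)^{-k}$, i.e.\ $r' \leq r + \frac{r}{\alpha-1}$. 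This yields termination and $O(r)$ space. The delicate part is checking that a settled piece cannot become heavy again without paying through a new charged start. I would formalize this with the potential $\Phi = \sum_j \max(0, c_j - \alpha + 1)$, where $c_j$ is the number of starts in output interval $j$. One must then verify that each split decreases $\Phi$ by at least $\alpha-2$ while increasing it by at most $1$.

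For construction in $O(r \log r)$ time, I would follow Bertram, Fischer, and Nalbach~\cite{bertram2024move}. Keep the input starts in a balanced search tree and the output intervals in another, maintain a queue of heavy output intervals, and handle each split with $O(1)$ tree operations costing $O(\log r)$. With $O(r)$ splits in total this gives $O(r\log r)$ time and $O(r)$ space. Finally, $\SPRED$ is computed by a single merge of the sorted $S'_\pi$ against $S'$.
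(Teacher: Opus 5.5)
\paragraph{Overall assessment.} The paper gives no proof of this statement; it cites Nishimoto and Tabei for balancing, Brown, Gagie and Rossi for the $\alpha$-parameterized bound ($r'\le r+\frac{r}{\alpha-1}$, fewer than $2\alpha$ fast forwards), and Bertram, Fischer and Nalbach for the $O(r\log r)$ construction. Your plan reconstructs exactly that argument. Three parts are fine as written: correctness for any $S'\supseteq S$, the bound of fewer than $2\alpha$ fast forwards once no output interval is heavy, and the balanced-tree construction.

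\paragraph{The gap: termination and size.} The step you flag as the main obstacle does not go through as written.

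\emph{Split size.} If you cut at $p$, the $\alpha$-th start inside the heavy output interval, the left piece $[S_\pi[j],p)$ contains $\alpha-1$ starts. It does not contain $\alpha$, as you later say.

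\emph{Potential.} With $\Phi=\sum_j\max(0,c_j-\alpha+1)$, the cut lowers $\Phi$ by exactly $\alpha-1$, and the new start can raise it by $1$. The net decrease is therefore only $\alpha-2$.

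\emph{Case $\alpha=2$.} You explicitly allow this case, and it is precisely Nishimoto and Tabei's original ``at least four starts'' rule. Here the net decrease is $0$, so you get neither termination nor a size bound.

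\emph{Case $\alpha\ge 3$.} You only get at most $r/(\alpha-2)$ splits, not $r/(\alpha-1)$.

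\emph{Geometric series.} It has the same defect: $r\sum_{k\ge0}(\alpha-1)^{-k}=r\frac{\alpha-1}{\alpha-2}=r+\frac{r}{\alpha-2}$, not $r+\frac{r}{\alpha-1}$, and it diverges at $\alpha=2$.

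\emph{Charging heuristic.} The informal ``settled starts'' charging is not sound by itself, because a settled piece can later absorb new starts and become heavy again. The potential has to carry the whole argument.

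\paragraph{The fix.} The problem is an off-by-one. Cut at the $(\alpha+1)$-th start inside the heavy output interval. This point is strictly after $S_\pi[j]$, so the offset satisfies $1\le d<\ell_j$. The left piece then holds exactly $\alpha$ starts and the right piece holds $c_j-\alpha\ge\alpha$.

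Use $\Phi=\sum_j\max(0,c_j-\alpha)$. The cut replaces the contribution $c_j-\alpha$ by $0+(c_j-2\alpha)$, so it lowers $\Phi$ by exactly $\alpha$. The new start raises $\Phi$ by at most $1$, wherever it lands, including inside one of the two new pieces.

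Since $\Phi\ge 0$ and $\Phi_0\le\sum_j c_j=r$, there are at most $r/(\alpha-1)$ splits for every $\alpha\ge 2$. This gives $r'\le r+\frac{r}{\alpha-1}$, as you intended.

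Alternatively, keep your split rule and simply fix $\alpha\ge 3$. That still yields $r'=O(r)$ and $O(r)$ splits, which is all the theorem (and your $O(r\log r)$ construction bound) actually needs.
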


\section{Average Case Move Structures}

\subsection{Length Splitting}
In practice, move structures often do not store $S$, $S_\pi$ and $\SPRED$ exactly as described, but in some $O(r)$-space representation of the components which support access for an arbitrary rank $j$. Brown, Gagie, and Rossi~\cite{brown2022rlbwt} suggested a practical improvement by switching to \textit{relative positions}, replacing $S$ with $S_\ell$, where $S_\ell[j]$ is the length of the \xth{j} interval. Using relative positions, a move query now returns the interval and offset within that interval rather than the absolute value; however, representing $S_\ell$ uses less space than representing $S$ naively in practice, since $S_\ell$ is the delta encoding of $S$. Similarly, they noted that, whether using absolute or relative positions, you could replace $S_\pi$ with $S_{\Delta}$, where $S_{\Delta}[j]$ is the offset of $S_\pi[j]$ from its predecessor in $S$.

Brown, Gagie, and Rossi~\cite{brown2022rlbwt} then introduced splitting intervals with respect to a maximum length to practically reduce the number of fast forwards, experimenting with values as a function of the average interval length $\frac{n}{r}$. Bertram, Fischer, and Nalbach~\cite{bertram2024move} (and later, Zakeri et al.~\cite{movi2preprint}) leveraged length splitting using an arbitrary constant to instead reduce the size of representing any component in $S_\ell$ and/or $S_\Delta$. We show that a special type of length splitting, which we call \textit{length capping}, guarantees additional properties with respect to move query time.

\subsection{Length Capping}
Consider \textit{length capping} with a fixed parameter $c$, which splits intervals of a move structure such that the maximum length of any interval is $L=c \cdot \frac{n}{r}$, a constant factor of the average interval length. Let the amortized complexity of a move structure refer to supporting $n$ move queries consecutively from an arbitrary starting position of permutations formed from a single cycle, which is equivalent to performing each of the $n$ distinct move queries. Length splitting in this way gives an average of at most $c+1$ fast forwards across distinct move queries while introducing at most $\frac{r}{c}$ new intervals:

\begin{theorem}
    Given a permutation $\pi$ over $[0,n-1]$, let $S$ be the set of $r$ positions such that either $i=0$ or $\pi(i-1) \neq \pi(i)-1$. Then given an unbalanced move structure, we can construct in $O(r)$-time and space a length capped move structure of size $O(r)$ words such that performing $n$ distinct move queries is $O(n)$-time (average $O(1)$-time per query).
\label{thm:amortized}
\end{theorem}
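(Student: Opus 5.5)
Our plan has three parts: a construction from the unbalanced move structure, a count of intervals, and a charging argument that bounds the total number of fast forwards over all $n$ distinct move queries.

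\textbf{Construction and size.} Fix $c$ and set $L=\lceil c\cdot\frac{n}{r}\rceil$. We scan the unbalanced intervals $[S[j],S[j]+\ell_j)$ in order and split each interval of length $\ell_j>L$ into $\lceil \ell_j/L\rceil$ pieces, all of length $L$ except possibly the last. Each piece still permutes contiguously, so the new start set $S'\supseteq S$ defines a valid move structure. The number of new intervals is at most $\sum_j \lfloor \ell_j/L\rfloor\le n/L\le r/c$. Hence $r'=|S'|\le r+r/c\in O(r)$.

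Next we build $S'_\pi$ and $S'_{rank(\pi)}$ in $O(r)$ time. The image of a piece starting at $S[j]+kL$ is $S_\pi[j]+kL$. Its predecessor rank in $S'$ is found as follows. The images of the pieces of interval $j$ are increasing, and the image targets, sorted by value, follow the inverse order of intervals under $\pi$. We therefore sort the $r'$ image values by a merge: they form $r$ increasing arithmetic runs, whose heads are ordered via the old $S_{rank(\pi)}$ and the inverse permutation of interval order, which we obtain by bucketing in $O(r)$. We then merge-scan these sorted values against the sorted $S'$ with two pointers. The main subtlety is doing this in linear time rather than $O(r\log r)$. The alternative is a direct scan: for each new piece start $x$, begin at the old predecessor rank translated to $S'$ and walk forward. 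The total walking is bounded by the monotone two-pointer argument over sorted images, and sorted images come for free because the images of all $n$ positions tile $[0,n-1]$. Concretely, we iterate intervals in order of $S_\pi$, which is a single pass once we have the inverse rank array of size $r'$.

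\textbf{Fast-forward bound (main step).} A query at position $i$ in interval $j$ moves to $\pi(i)$. Starting from rank $S'_{rank(\pi)}[j]$, it advances past every interval start of $S'$ lying in $(\pi(S'[j]),\pi(i)]$. So the number of fast forwards is $|S'\cap(\pi(S'[j]),\pi(i)]|$. Summing over $i$ in interval $j$ (length $\ell'_j\le L$) gives at most $\ell'_j$ times the number of starts of $S'$ strictly inside the image interval $\pi([S'[j],S'[j]+\ell'_j))$. Each start $s\in S'$ lies strictly inside exactly one image interval, because the images partition $[0,n-1]$. That interval has length at most $L$, so $s$ is charged at most $L$ times in total. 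The total number of fast forwards is therefore at most $r'\cdot L\le (r+r/c)\cdot c\frac{n}{r}=(c+1)n$.

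Adding the $n$ constant-time accesses, the total cost is $O(n)$, so the average is $O(1)$ per query, or at most $c+1$ fast forwards. Since a single-cycle traversal visits each position exactly once, the same bound holds for the amortized statement. The bound is tight in form, which matches the claimed $c+1$. We expect the $O(r)$-time computation of the new predecessor ranks to be the real obstacle; the charging argument itself is short.
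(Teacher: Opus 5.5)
Your interval count and fast-forward bound are the paper's argument, summed in a different order. The paper fixes an offset $k$, shows the fast forwards $W[k]$ over all queries at offset $k$ total at most $r'$, and sums over at most $L$ offsets. You instead fix a start $s\in S'$ and charge it at most $L$ times. Both count the same at most $r'L$ pairs; with your ceiling on $L$ this picks up only a harmless additive $O(r)$.

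\textbf{The gap: computing $S'_{rank(\pi)}$ in $O(r)$ time.} The paper only asserts this step from $r'\in O(r)$, and your justification of it is incorrect.
\begin{itemize}
\item Bucketing by the old $S_{rank(\pi)}$ sorts the heads $S_\pi[j]$ only up to ties. When many output intervals start inside one input interval (e.g.\ $\Theta(r)$ short ones in arbitrary order inside a single long input interval), their relative order is undetermined.
\item The tiling of $[0,n-1]$ does not recover that order in linear time. Finding the interval whose image starts at $S_\pi[j]+\ell_j$ is itself a search.
\item Your ``inverse rank array'' is precisely the object being computed, so that route is circular.
\item The direct-scan fallback is not linear either if it walks through $S'$ from the $S'$-index of $S[S_{rank(\pi)}[j]]$. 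A long input interval is split into $\Theta(r/c)$ pieces and can contain, near its end, the heads of $\Theta(r)$ output intervals. Each such walk then crosses $\Theta(r/c)$ pieces, for $\Theta(r^2/c)$ total work.
\end{itemize}

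\textbf{The fix needs no sorting.} Let $P[b]$ be the number of pieces of old intervals $0,\dots,b-1$. Pieces inside an old interval are spaced exactly $L$ apart, so any $y$ with $S.rank(y)=b$ satisfies $S'.rank(y)=P[b]+\lfloor (y-S[b])/L\rfloor$, computable in $O(1)$. For each old interval $j$, handle its piece images $y_k=S_\pi[j]+kL$ in increasing $k$. Maintain $b=S.rank(y_k)$ with a pointer into $S$ that starts at $S_{rank(\pi)}[j]$ and only moves forward. That pointer passes only starts of $S$ lying strictly inside the output interval of $j$. These output intervals are disjoint, so the total work is $O(r+r')=O(r)$. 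This is the same disjointness argument that drives your fast-forward bound.
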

\begin{proof}
    Let $S'$ be the set of interval start positions after length capping, with $|S'|=r'$, and $S'_\pi=\{\pi(i) \mid \forall i \in S'\}$. Since there can be at most $\frac{r}{c}$ disjoint intervals of length $L$ in a domain of size $n$, it follows that length capping introduces at most $n/L=\frac{n}{c\cdot(n/r)}=\frac{rn}{cn}=\frac{r}{c}$ additional intervals and thus $r'\leq r + \frac{r}{c}\in O(r)$. Thus, length capping can be performed in $O(r)$-time and space.

    Where the \xth{j} input interval has length $\ell=S'[j+1]-S'[j]$ (or $\ell=n - S'[j]$ if $j=r-1$), let its \textit{output interval} be $\left[S'_\pi[j], S'_\pi[j] + \ell\right)$; informally, the contiguously permuted positions of the \xth{j} interval. The maximum number of fast forwards for any query from the \xth{j} interval corresponds to the number of positions in $S'$ which intersect with the output interval of $j$. Because the intervals are disjoint, so are the output intervals, and thus the total number of intersections of positions in $S'$ across all distinct output intervals is at most $r'$.

    Due to length capping, there are at most $k\in[0,L)$ distinct offsets within any interval from its start. Let $W[k]$ be the total number of fast forwards across all distinct move queries whose position is at offset $k$ from the start of its interval. The worst case for $W[k]$ is when the offset $k$ is the last offset in its interval, $k=\ell-1$, and we traverse to the bottom of every output interval while fast forwarding. The total number across all output intervals in this case is exactly the number of intersections of $S'$ with the distinct output intervals. Therefore, $W[k] \leq r'$, and the total number of fast forwards across all $n$ distinct positions is 
    \[
    \sum_{k=0}^{L}W[k]\leq L \cdot r' = c \cdot (\frac{n}{r})\cdot(r+\frac{r}{c})=n\cdot c + n = n(c+1)
    \] 
    and thus given $n$ distinct move queries the total time is $O(n)$ and the average cost of a single query is at most $c+1 \in O(1)$-time.
\end{proof}
  
\begin{corollary}
    A length capped move structure supports performing $n$ consecutive queries from an arbitrary starting position in $O(n)$-time (amortized $O(1)$-time per query) when the permutation is formed from a single cycle.
    \label{cor:amortized}
\end{corollary}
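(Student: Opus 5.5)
The plan is to reduce the corollary to Theorem~\ref{thm:amortized}. The key observation is that when $\pi$ is formed from a single cycle, the orbit $i, \pi(i), \pi^2(i), \dots, \pi^{n-1}(i)$ of any starting position $i$ visits every position of $[0,n-1]$ exactly once. Hence the $n$ consecutive move queries starting from $i$ are exactly the $n$ distinct move queries, one issued from each position, just in a different order.

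The first step is to make precise that the cost of a move query depends only on the position queried. A query at position $p$ is given $p$ together with the rank $j$ of $S'.pred(p)$. Its cost is $1$ plus the number of fast forwards. This equals the number of positions of $S'$ lying in the output interval of $j$ between $S'_\pi[j]$ and $\pi(p)$, which is a function of $p$ alone.

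I also need the consecutive traversal to supply correct ranks at every step. The move query at $\pi^k(i)$ returns the rank of $S'.pred(\pi^{k+1}(i))$, which is exactly the input required for the next query. The only external input is the rank for the initial position $i$, which is assumed given as in Theorem~\ref{thm:nt}. Alternatively it can be found in $O(r) \subseteq O(n)$ time by scanning $S'$, which does not affect the total.

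Summing per-position costs over the orbit therefore gives the same total as summing over all $n$ distinct positions. By Theorem~\ref{thm:amortized}, that total is at most $n(c+1) + O(n) = O(n)$, so the amortized cost is $O(1)$ per query.

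The only step needing care is the claim that the cost is independent of the order in which queries are issued. Nothing is cached or mutated between queries, so I expect this to be immediate. The main subtlety is making sure the traversal really covers each position once. This is exactly where the single-cycle hypothesis is used: with several cycles, the orbit of $i$ could be a small cycle concentrated on expensive positions, and the averaging argument would fail.
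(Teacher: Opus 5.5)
Your proposal is correct and follows the paper's route: the paper notes just before Theorem~\ref{thm:amortized} that, for a single-cycle permutation, $n$ consecutive move queries from any starting position are equivalent to the $n$ distinct move queries, so the $O(n)$ total bound of that theorem applies directly. Your added points make explicit what the paper leaves implicit: each query's cost depends only on the queried position, each query returns the predecessor rank needed by the next, and the initial rank can be found in $O(r)$ time.
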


\subsection{Other Results}
\label{sec:other}

A move structure maintaining $O(1)$-time move queries can be represented by $S$ as a sparse bitvector using $r \log \frac{n}{r} + O(r)$-bits, $S_\pi$ in $r \log n$ bits, and $\SPRED$ in $r \log r$ bits. Overall, this representation results in $O(r \log n)$-bits. Although Brown et al.~\cite{brown2024faster} showed that move structures for \LF and \FL can be represented in $O(r \log \frac{n}{r})$-bits, we do not know how to remove the $r \log n$ bit components for general permutations, even when switching $S_\pi$ for $S_\Delta$.

However, after length capping, since any element in $S_\Delta$ is the offset within an interval, it is upper bounded by the maximum length $L \in O(\frac{n}{r})$. Thus, with length splitting, we can support both absolute and relative positions by representing $S_\Delta$ in $O(r \log \frac{n}{r})$-bits. If using relative positions, since any element in $S_\ell$ is also bounded by $L$, we can represent it in the same $O(r \log \frac{n}{r})$-bits as representing $S$ using a sparse bitvector. This may be useful in practice since array access is cache friendly compared to sparse bitvector select queries.

Overall, this results in $O(r \log r + r \log \frac{n}{r})=O(r \log n)$-bits, so the asymptotic bound is unchanged. However, no single component requires $O(r \log n)$-bits. Let $\hat{r}=\Theta(r)$ to suppress constant factors of $r$ introduced by balancing and/or length capping. Then the total space in bits by applying length capping changes from 
$$\hat{r} \log \hat{r} + \hat{r}\log \frac{n}{\hat{r}} + \hat{r} \log n + O(\hat{r}) = 2 \hat{r} \log n + O(\hat{r})$$
to 
$$\hat{r} \log \hat{r} + 2 \hat{r} \log \frac{n}{\hat{r}} + O(\hat{r}) = 2 \hat{r}\log n - \hat{r} \log \hat{r} + O(\hat{r}),$$ 
effectively saving $O(r \log r)$-bits of space. This result can also be applied by length capping alongside balancing.

\begin{theorem}
    Given a permutation $\pi$ over $[0,n-1]$, let $S$ be the set of $r$ positions such that either $i=0$ or $\pi(i-1) \neq \pi(i)-1$. We can construct in $O(r \log r)$-time and $O(r)$-space a length capped and balanced move structure, which, given position $i$ and the rank of its predecessor in $S$, computes a move query returning $\pi(i)$ and the rank of $\pi(i)$'s predecessor in $S$ in $O(1)$-time. Where $\hat{r}=\Theta(r)$, we can do so in $\hat{r} \log \hat{r} + 2\hat{r}\log\frac{n}{\hat{r}} + O(\hat{r})$ bits of space.
\end{theorem}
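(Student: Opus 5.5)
The construction runs in two stages. First, length capping with parameter $c$ is applied directly to the input intervals defined by $S$. This step never needs a move structure, because it only splits each interval of length $\ell > L = c\cdot\frac{n}{r}$ into $\lceil \ell/L\rceil$ pieces. Each new start $s$ gets $\pi(s) = \pi(i') + (s-i')$ by (\ref{eq:pred_map}), where $i'$ is the original start. By the counting argument in the proof of Theorem~\ref{thm:amortized}, this yields at most $r + \frac{r}{c}$ intervals in $O(r)$ time. Second, the balancing algorithm of Bertram, Fischer, and Nalbach~\cite{bertram2024move} is run on the length-capped intervals, using parameter $\alpha$. This takes $O(r'\log r') = O(r\log r)$ time and $O(r)$ space, and by Theorem~\ref{thm:nt} and the parameterization of Brown, Gagie, and Rossi it adds at most $\frac{r'}{\alpha-1}$ intervals. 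The final count is therefore $\hat r \le (r+\frac{r}{c})(1+\frac{1}{\alpha-1}) = \Theta(r)$, and every query takes fewer than $2\alpha = O(1)$ fast forwards.

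The key invariant to check is that balancing preserves the length cap. Balancing only ever \emph{splits} intervals; it never merges them or moves their boundaries outward. So every final interval is a sub-interval of a capped interval and has length at most $L$. Note also that the balancing guarantee applies to the interval set it is given. If the statement's ``rank of its predecessor in $S$'' is read as referring to the final refined set $S'$, the $O(1)$ query time follows immediately. The same convention is the one used in Theorem~\ref{thm:nt}, where $S'\supseteq S$ is the set actually stored. Since the query procedure is unchanged, move queries return $\pi(i)$ together with the rank in the refined set, each in $O(1)$ time.

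For the space bound, I follow the accounting of Section~\ref{sec:other}, now over $\hat r$ intervals. $\SPRED$ stores ranks in $[0,\hat r)$, which costs $\hat r\log \hat r$ bits. $S_\Delta$ stores offsets bounded by the maximum interval length $L = c\frac{n}{r} = O(\frac{n}{\hat r})$, which costs $\hat r\log\frac{n}{\hat r} + O(\hat r)$ bits. For $S$ there are two options. A sparse bitvector with $\hat r$ ones in universe $n$ costs $\hat r\log\frac{n}{\hat r}+O(\hat r)$ bits. Alternatively, under relative positions, $S_\ell$ is an array of lengths each at most $L$, with the same bound. Summing the three components gives $\hat r\log\hat r + 2\hat r\log\frac{n}{\hat r} + O(\hat r)$.

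The main obstacle I expect is making the constant absorption precise. The entries are bounded by $c\frac{n}{r}$, but the bound should be stated in terms of $\hat r$. Since $\hat r \le C r$ for a constant $C$ depending only on $c$ and $\alpha$, we have $\log(c\frac{n}{r}) \le \log\frac{n}{\hat r} + \log(cC)$. The additive constant per entry contributes only $O(\hat r)$ bits in total. The remaining subtlety is confirming that the balancing algorithm of~\cite{bertram2024move} really only refines intervals, which is true by construction.
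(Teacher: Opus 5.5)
Your proposal is correct and follows the paper's proof. You length cap first, balance the resulting $O(r)$ intervals via Theorem~\ref{thm:nt}, and charge $\SPRED$, $S_\Delta$ (bounded by $L=c\frac{n}{r}$) and a sparse bitvector for $S$. Your explicit checks supply details the paper leaves implicit: that balancing only refines intervals, so the cap and hence the $S_\Delta$ bound survive, and that $\log(c\frac{n}{r})=\log\frac{n}{\hat r}+O(1)$. One small caveat is that only the sparse-bitvector option literally returns $\pi(i)$ as an absolute position; with $S_\ell$, a move query returns an interval and offset instead.
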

\begin{proof}
    We find $\hat{r}$ intervals by length capping and then perform balancing; this results in $O(r)$ total intervals and thus the construction bounds are correct and we achieve $O(1)$-time queries by Theorem~\ref{thm:nt}. Consider representing $S$ using a sparse bitvector alongside $S_\Delta$ and $\SPRED$. Since any element in $S_\Delta$ is upper bounded by $L=c \cdot \frac{n}{r}$ it can be represented in $\hat{r} \log \frac{n}{r}$ bits and overall we achieve $\hat{r} \log \hat{r} + 2\hat{r}\log\frac{n}{\hat{r}} + O(\hat{r})$ bits of space.
\end{proof}

Another consequence of length capping is that we can write an alternative bound for the worst case $O(r)$ fast forwards. After Theorem~\ref{thm:amortized}, no single move query requires more than $O(\frac{n}{r})$ fast forwards, since the largest interval is itself of length $O(\frac{n}{r})$. Tatarnikov et al.~\cite{tatarnikov2023moni} noted that you can apply \textit{exponential search} when using absolute positions to compute a move query in time logarithmic to the number of fast forwards. Applying the technique to a length capped move structure gives a new worst case guarantee.

\begin{corollary}
    Using exponential search on a length capped move structure results in worst case $O(\log\frac{n}{r})$-time per move query in $O(r)$-space.
\label{cor:exp}
\end{corollary}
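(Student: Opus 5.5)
The plan is to bound the number of fast forwards in any single move query by the length cap, and then replace the linear scan with exponential search. This turns a linear bound of $O(\frac{n}{r})$ steps into a logarithmic one.

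\textbf{Step 1: bound the fast forwards per query.} Take the length capped move structure from Theorem~\ref{thm:amortized}. It has $r' \leq r + \frac{r}{c} \in O(r)$ intervals, each of length at most $L = c\cdot\frac{n}{r}$. Fix a query at position $i$ in the $j'$th interval. The fast forwards start at rank $\SPRED[j']$ and end at rank $j''$ with $S'[j''] = S'.pred(\pi(i))$. Every rank strictly between $\SPRED[j']$ and $j''$, and $j''$ itself, corresponds to a start position lying in $(S'_\pi[j'], \pi(i)]$. This range sits inside the output interval of $j'$, which has length at most $L$. Since start positions are distinct integers, at most $L$ of them fall in this range. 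So the number of fast forwards is $f \leq L \in O(\frac{n}{r})$.

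\textbf{Step 2: replace the scan with exponential search.} With absolute positions, $S'[\cdot]$ is accessible in $O(1)$ time, and it is sorted. Following Tatarnikov et al., I would probe ranks $\SPRED[j'] + 2^t$ for $t = 0, 1, 2, \dots$, stopping at the first $t$ where $S'[\SPRED[j'] + 2^t] > \pi(i)$ or the rank exceeds $r'-1$. Then I would binary search within the last doubling window. Both phases cost $O(\log f)$. Combining with Step 1, the cost is $O(\log L) = O(\log c + \log\frac{n}{r}) = O(\log\frac{n}{r})$ for constant $c$. The remaining work per query is $O(1)$: computing $\pi(i)$ via (\ref{eq:pred_map}) and returning the new rank.

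\textbf{Space and remaining care.} The structure still stores only $O(r')=O(r)$ words, so the space is $O(r)$. The main point needing care is Step 1, specifically that the scan count is genuinely bounded by the output interval's length rather than by $r$. This relies on the output interval of $j'$ being exactly $[S'_\pi[j'], S'_\pi[j']+\ell)$ with $\ell \leq L$, so that $\pi(i) < S'_\pi[j'] + L$. I would also handle the degenerate case $\frac{n}{r} = O(1)$ by reading $\log\frac{n}{r}$ as $\max(1, \log\frac{n}{r})$.

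Relative positions would need prefix sums to support the comparisons in the search, so I would state the result for absolute positions. A sparse bitvector, or the plain array $S'$ in $O(r)$ words, both suffice for this.
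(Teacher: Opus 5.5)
Your proposal is correct and matches the paper's argument. Length capping bounds the fast forwards of any single query by the maximum interval length $L = c\cdot\frac{n}{r}$, and exponential search over absolute positions (Tatarnikov et al.) turns this into $O(\log\frac{n}{r})$ time while the structure stays at $O(r)$ words; your version is just more explicit, e.g.\ in placing the scanned start positions inside $(S'_\pi[j'], \pi(i)]$ and in handling the degenerate case $\frac{n}{r} = O(1)$.
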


\section{RLBWT Applications}
\subsection{BWT Permutations}
\label{sec:rlbwt}
 
 A major benefit of length capping is the $O(r)$-time construction in comparison to balancing with $O(r \log r)$-time. However, a technicality of Theorem~\ref{thm:amortized} is that it specified being given an unbalanced move structure, required since finding the move structure itself takes superlinear time with respect to $r$. That is, given $S$ and $S_\pi$, finding $\SPRED$ in $O(r)$-space requires computing an $O(r \log r)$-time sorting of $S_\pi$ or computing the required predecessor queries in $O(r \log \log n)$-time using a sparse bitvector. Based on known lower bounds for comparison based sorting and predecessor queries, it would appear that this problem cannot be solved in $O(r)$-time and space for an arbitrary permutation.

 However, various permutations defined by the \RLBWT do benefit from the improved time complexity to construct a length capped move structure. Both $\LF$ and $\phi$ as well as their respective inverses $\FL$ and $\phi^{-1}$ can be represented as $O(r)$-space move structures~\cite{nishimoto2021optimal}, where $r$ is the number of runs in the \BWT. For $\LF$, it easily follows that $\BWT[i]=\BWT[i+1]$ implies $\LF(i+1)=\LF(i)+1$ and thus its permutation has $r$ contiguously permuted runs. For $\phi$, given the properties of \LF, it follows that $$\phi (\SA [i + 1])
= \SA [i]
= \SA [\LF (i)] + 1
= \phi (\SA [\LF (i + 1)]) + 1
= \phi (\SA [i + 1] - 1) + 1$$ and thus there are at most $r$ contiguously permuted runs in the permutation of $\phi$~\cite{rindex}. Further, these \RLBWT based permutations are formed from a single cycle, allowing us to apply the amortized results of Corollary~\ref{cor:amortized}.

The properties of $\RLBWT$ based permutations also permit alternative constructions of their corresponding move structure. For $\LF$ and $\FL$ it is possible to leverage the properties of the $\BWT$ to find $\SPRED$ in $O(r)$-time and space. This is since the \LF permutation is determined by the lexicographic order and individual rank of its \BWT characters: once the \RLBWT has been scanned and $S_\pi$ computed, we can obtain $\SPRED$ by iterating over the intervals by order of their corresponding output as computed by lexicographic rank of its \RLBWT character~\cite{brown2023bwt}. This allows us to apply the result of Theorem~\ref{thm:amortized} to derive the construction time of a length capped move structure for these permutations.

\begin{lemma}
    Given $\RLBWT[0..r-1]$, a length capped move structure for $\LF$ or its inverse $\FL$ can be constructed in $O(r)$-time and space. 
\label{lem:lf}
\end{lemma}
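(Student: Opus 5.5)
The plan is to build the unbalanced move structure for $\LF$ (or $\FL$) directly from $\RLBWT[0..r-1]$ in $O(r)$-time and space. Then I apply Theorem~\ref{thm:amortized}, which turns any unbalanced move structure into a length capped one in $O(r)$-time and space. So the whole task reduces to computing $S$, $S_\pi$ and $\SPRED$ in linear time without sorting or predecessor queries.

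\textbf{Step 1: compute $S$ and $S_\pi$.} A single left-to-right scan of the \RLBWT gives $S$ as prefix sums of run lengths. For $S_\pi$, recall $\LF(i)=C[\BWT[i]]+\mathrm{rank}_{\BWT[i]}(i)$. Since $\BWT$ is constant on a run, the run starts are exactly the interval starts of \LF. I first compute per-character totals and the array $C$. Because $\sigma\le r$, I would remap the alphabet to the characters actually occurring; doing this in $O(r)$-time may need the characters to be integers in a polynomial range, so that radix sort applies, and I would state that assumption. A second scan, keeping a running count per character, then gives $S_\pi[j]=C[\RLBWT[j].c]+(\text{occurrences of } \RLBWT[j].c \text{ before run } j)$.

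\textbf{Step 2: compute $\SPRED$, the main obstacle.} The key observation is that the output intervals of \LF appear in increasing order of position if we list runs grouped by character in lexicographic order, and by run index within each character. I obtain this order in $O(r)$ by bucketing the run indices by their remapped character, which is a stable counting sort. I then merge-scan this list of $S_\pi$ values, which is now sorted, against the sorted list $S$ with a single pointer. For each run $j$ in output order, I advance the pointer in $S$ while the next start is $\le S_\pi[j]$, and set $\SPRED[j]$ to the pointer's rank. Both sequences are monotone, so the total pointer movement is $O(r)$.

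\textbf{Step 3: handle $\FL$ and finish.} For $\FL$ the intervals are the images of the \LF intervals: the start positions are the sorted $S_\pi$ values, taken in the output order from Step 2, and their images are the corresponding $S$ values. I also need the other interval boundaries these induce. The intervals of $\FL$ are exactly the output intervals of \LF, since \LF has exactly $r$ intervals and its inverse must therefore also have $r$ runs. The predecessor ranks are again found by a merge-scan, this time of $S$, which is already sorted, against the sorted $S_\pi$. With an unbalanced move structure for either permutation in hand, Theorem~\ref{thm:amortized} yields the length capped structure in $O(r)$ additional time and space, which proves the lemma.
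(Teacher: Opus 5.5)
Your proposal is correct and follows the same route as the paper: compute $S_\pi$ by a scan of the \RLBWT, obtain $\SPRED$ by visiting the intervals in output order given by the lexicographic rank of their characters, and then apply Theorem~\ref{thm:amortized}. Your stable bucketing with a monotone merge-scan, the explicit integer-alphabet assumption, and the handling of \FL via the output intervals of \LF fill in details the paper leaves to its citation. One small looseness is that \BWT run starts can be a proper superset of the breakpoints of \LF, since adjacent runs may map contiguously, but this is harmless because any refinement of the breakpoints still gives a valid move structure.
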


For $\phi$ and $\phi^{-1}$, the best bound in terms of $r$ is still $O(r \log r)$-time and $O(r)$-space, which is the complexity of finding the unbalanced move structure of an arbitrary permutation. However, Sanaullah et al.~\cite{sanaullah2026rlbwt} noted that you could iterate over the text using $\LF$ to sample $\SA$ positions using $O(n)$ consecutive move queries to derive $\SPRED$ in $O(n)$-time and $O(r)$-space. Although $O(r \log r)$-time may often be preferred to $O(n)$-time in practice, the latter is useful when constructing a length capped move structure for the purpose of performing $n$ consecutive queries as described in Corollary~\ref{cor:amortized} so that the final complexity depends only on $n$. Their method assumes a balanced move structure and thus overall they require $O(n+r\log r)$-time when including time to balance. By applying Lemma~\ref{lem:lf}, we achieve new bounds for construction of an unbalanced move structure and a length capped move structure for $\phi$ and $\phi^{-1}$.

\begin{lemma}
    Given $\RLBWT[0..r-1]$ corresponding to $\BWT[0..n-1]$, an unbalanced move structure for $\phi$ or its inverse $\phi^{-1}$ can be constructed in $O(n)$-time and $O(r)$-space.
\label{lem:un_phi}
\end{lemma}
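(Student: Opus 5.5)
Building on Lemma~\ref{lem:lf}, I would first construct in $O(r)$-time and space a length capped move structure for $\LF$ from $\RLBWT[0..r-1]$. By Corollary~\ref{cor:amortized}, since $\LF$ is formed from a single cycle, this supports $n$ consecutive $\LF$ queries from any starting position in $O(n)$ total time. Starting from the position $i_0$ with $\SA[i_0]=0$ (the row whose \BWT character is \$), each step to $\LF^k(i_0)$ gives the row whose suffix starts at $\SA$ value $(n-k)\bmod n$. So as we iterate we always know the current text position $p=\SA[\LF^k(i_0)]$ for free.

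The run structure of $\phi$ is determined by the boundaries of \BWT runs. From the identity $\phi(\SA[i+1])=\phi(\SA[i+1]-1)+1$ whenever $\BWT[i]=\BWT[i+1]$, a new $\phi$-interval can only start at text positions $\SA[i]+1$, with $i$ the last row of a \BWT run (equivalently, $\SA[i]$ at a run end, shifted). Its image is determined by the \SA value at the neighbouring row, $i-1$ for $\phi$ and $i+1$ for $\phi^{-1}$.

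During the traversal I would record, for each of the $r$ run boundaries, the pair $(\SA[i]+1,\ \SA[i\mp1]+1)$ with $i$ the first (for $\phi$) or last (for $\phi^{-1}$) row of a run. This works because $\SA$ values are known for every visited row. A row is a run head or tail precisely when its offset within its current \LF interval is $0$ or $\ell-1$, and this is known from the move query state, since \LF intervals refine \BWT runs. The neighbouring row lies in the adjacent run, so its \SA value must also be captured. I would store, per run index, the \SA value at its head and tail as the traversal visits them, in $O(r)$ words indexed by the run rank returned by the move query. This gives $S$ and $S_\pi$ for $\phi$ in $O(n)$ time and $O(r)$ space. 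The text positions of run starts are visited in decreasing text order (positions $n-1,n-2,\dots$), so $S$ is produced already sorted in reverse order of discovery and needs no sort.

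The main obstacle is computing $\SPRED$, i.e. the rank in $S$ of each $\phi(s)$, without sorting. I would handle it with a second traversal. Because $S$ is now known in sorted order, a single pass over text positions $n-1$ down to $0$ (again via the length capped $\LF$ move structure) maintains a pointer to the current predecessor rank in $S$ by decrementing it whenever $p$ crosses an element of $S$. Each $\phi$ output value $S_\pi[j]$ equals $\SA[i\mp1]+1$ for some recorded row, so it is revisited at a known row during this pass. I would mark those rows by run index and offset (head or tail of a run) so that, upon reaching them, the current predecessor rank is written into $\SPRED[j]$. Both passes cost $O(n)$ by Corollary~\ref{cor:amortized}, plus $O(r)$ for Lemma~\ref{lem:lf}, all in $O(r)$ working space. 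The construction for $\phi^{-1}$ is symmetric.
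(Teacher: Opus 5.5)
Your strategy is the paper's: its proof plugs the $O(r)$-time length capped $\LF$ structure of Lemma~\ref{lem:lf} into the method of Sanaullah et al. That method samples $\SA$ values during a full $\LF$ traversal, which costs $O(n)$ by Corollary~\ref{cor:amortized}, and obtains $\SPRED$ without sorting; you are reconstructing it. However, the step that identifies the $\phi$ samples is wrong as written.

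The identity you quote says that text position $\SA[i+1]$ is not a $\phi$-boundary when rows $i$ and $i+1$ lie in the same run. So $\phi$-intervals can only start at $\SA[j]$ for run \emph{heads} $j$. They do not start at $\SA[i]+1=\SA[\FL(i)]$ for run tails $i$, which is in general a different row. (Your two paragraphs also disagree on whether heads or tails are used for $\phi$.) The correct samples carry no $+1$:
\begin{itemize}
\item for $\phi$, the pairs $(\SA[j],\SA[j-1])$ for every run head $j$;
\item for $\phi^{-1}$, the pairs $(\SA[j],\SA[j+1])$ for every run tail $j$.
\end{itemize}
Your shifted pairs are not even valid evaluations of $\phi$. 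For $T=\texttt{banana\$}$ we have $\SA=[6,5,3,1,0,4,2]$ and $\BWT=\texttt{annb\$aa}$. The true $\phi$-interval starts are $\{0,1,4,5,6\}$, the $\SA$ values of the run heads $0,1,3,4,5$. Yet your pair for the head $i=4$ is $(\SA[4]+1,\SA[3]+1)=(1,2)$, although $\phi(1)=3$.

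The shift also breaks your $\SPRED$ pass. The values $\SA[i-1]+1$ sit at rows $\FL(i-1)$, which are generally neither run heads nor tails, so marking head/tail rows would not find them. With the corrected pairs, every $S_\pi$ value is the $\SA$ value of a run tail (for $\phi$) or a run head (for $\phi^{-1}$). Your predecessor-pointer pass then works as you describe.

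A smaller point concerns detecting run heads and tails. In a length capped $\LF$ structure, long \BWT runs are split into several intervals. So offset $0$ or $\ell-1$ within an interval does not ``precisely'' identify run heads or tails. Likewise, the rank returned by a move query is an interval rank, not a run rank. The fix is to store with each of the $O(r)$ intervals its original run index (or head/tail flags), which is available at no extra cost when building the structure of Lemma~\ref{lem:lf}. With both corrections your argument yields the lemma in $O(n)$ time and $O(r)$ space.
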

\begin{proof}
    By using the result of Lemma~\ref{lem:lf} in place of a balanced move structure for the original method of Sanaullah et al.~\cite{sanaullah2026rlbwt}.
\end{proof}

\begin{lemma}
    Given $\RLBWT[0..r-1]$, a length capped move structure for $\phi$ or its inverse $\phi^{-1}$ can be constructed in $O(n)$-time and $O(r)$-space.
\label{lem:phi}
\end{lemma}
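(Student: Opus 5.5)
The plan is to chain Lemma~\ref{lem:un_phi} and Theorem~\ref{thm:amortized}. Lemma~\ref{lem:un_phi} builds an unbalanced move structure for $\phi$ (or $\phi^{-1}$). Theorem~\ref{thm:amortized} then turns any unbalanced move structure into a length capped one in $O(r)$-time and space. So the total cost is $O(n)+O(r)$ time and $O(r)$ space. Since $r\le n$, this is $O(n)$-time and $O(r)$-space.

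\textbf{Step one.} Starting from $\RLBWT[0..r-1]$, build the length capped move structure for $\LF$ in $O(r)$-time and space via Lemma~\ref{lem:lf}.

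\textbf{Step two.} Run the sampling procedure of Sanaullah et al.\ with this structure. Because $\LF$ is a single cycle, the $n$ consecutive move queries over the text cost $O(n)$ total by Corollary~\ref{cor:amortized}. From this traversal we obtain $S$, $S_\pi$ and $\SPRED$ for $\phi$ (or $\phi^{-1}$) using $O(r)$ working space. This is exactly Lemma~\ref{lem:un_phi}.

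\textbf{Step three.} Apply the length capping of Theorem~\ref{thm:amortized} to the resulting unbalanced structure. Split every interval longer than $L=c\cdot\frac{n}{r}$. This adds at most $\frac{r}{c}$ intervals, so the structure stays at $O(r)$ words.

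The step needing the most care is the updating of $\SPRED$ after splitting, since the output predecessor ranks must be recomputed against the enlarged set $S'$ without sorting. My approach is as follows.
\begin{itemize}
\item Scan the original intervals in order of $S$ and assign new ranks. Each original rank $j$ maps to a contiguous block of new ranks, and an offset array of $r$ entries records where each block starts.
\item For each (split) interval, its output start lies inside some original interval $\SPRED[j]$ at offset $S_\pi[j]-S[\SPRED[j]]$. The new predecessor rank is the block start of $\SPRED[j]$ plus $\lfloor \text{offset}/L\rfloor$. This works because splits occur at multiples of $L$ from each original interval's start.
\item Each sub-interval obtained by splitting original interval $j$ at offset $tL$ has output $S_\pi[j]+tL$. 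Its predecessor is found in the same way, using the previously known original predecessor. That original predecessor can be advanced monotonically along $S$ by walking forward, with total cost bounded by the number of intersections, which is $O(r')$ as in the proof of Theorem~\ref{thm:amortized}.
\end{itemize}
All of this takes $O(r)$ time, giving the claimed bounds.
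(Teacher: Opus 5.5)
Your proposal is correct and follows the paper's proof. The paper's proof consists only of applying Theorem~\ref{thm:amortized} to the unbalanced structure from Lemma~\ref{lem:un_phi}, which, as in your Step two, comes from running the method of Sanaullah et al.\ on the length capped $\LF$ structure of Lemma~\ref{lem:lf}. Your extra sketch for recomputing $\SPRED$ after splitting (block offsets per original rank, plus monotone forward walks charged to the disjoint output intervals) is a valid $O(r)$-time justification of a step the paper only asserts in the proof of Theorem~\ref{thm:amortized}.
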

\begin{proof}
    By applying Theorem~\ref{thm:amortized} to the result of Lemma~\ref{lem:un_phi}.
\end{proof}

\subsection{Optimal-time \BWT Inversion and \SA Enumeration}

To further illustrate the benefit of length capped move structures for scenarios where query time is amortized $O(1)$-time over $n$ consecutive steps, we apply the results of Section~\ref{sec:rlbwt} to known $\RLBWT$ algorithms. The most basic query involving a $\BWT$ is \textit{inversion}, which uses $\LF$ to iterate over every character in $\BWT[0..n-1]$ to return the original text $T[0..n-1]$. Such queries involving traversal using $\LF$ or $\FL$ could be considered iterating in \textit{text order}. However, it is not known how to do this in optimal $O(n)$-time and $O(r)$-space since obtaining $O(1)$-time $\LF$ requires $O(r \log r)$-time balancing (Theorem~\ref{thm:nt}). Using a length capped move structure, we obtain the optimal-time for inversion using only $O(r)$-space in addition to the output $T[0..n-1]$.

\begin{theorem}
    Given $\RLBWT[0..r-1]$ corresponding to $\BWT[0..n-1]$, the original text $T[0..n-1]$ can be inverted in optimal $O(n)$-time and $O(r)$ working space.
\end{theorem}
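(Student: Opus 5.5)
We will use Lemma~\ref{lem:lf} to build, in $O(r)$-time and $O(r)$-space, a length capped move structure for $\LF$ directly from $\RLBWT[0..r-1]$. We then run the standard inversion procedure on top of it. Inversion is a text-order traversal: start at the row $i_0$ with $\SA[i_0]=0$, i.e.\ the row whose \BWT character is $\$$. Repeatedly apply $\LF$, writing $\BWT[i]$ into $T$ from right to left.

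For this to work, every step must also know the \BWT character at the current position. The move structure already tracks the interval rank $j$ of the current position. Because length capping only splits \RLBWT runs and never merges them, each interval of $S'$ lies inside a single \BWT run. So we store, for each of the $r'\le r+\frac{r}{c}$ intervals, its character in $O(r)$ extra space and read $\BWT[i]$ in $O(1)$-time from the interval rank.

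The starting state is easy to find in $O(r)$-time. We locate the run containing $\$$; it is a run of length one. This gives both the position and the rank of its predecessor in $S'$, which is exactly the input a move query needs.

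The full procedure sets $i\gets i_0$ and writes $T[n-1]=\$$. It then performs $n-1$ further consecutive move queries. After each query, the character of the interval just moved out of is written into the next position of $T$ moving leftward. More precisely, at the row $i$ with $\SA[i]=p$ we have $\BWT[i]=T[p-1]$, so moving from $i$ to $\LF(i)$ emits $T[p-1]$. Each move query returns the new position together with its interval rank, so we never perform a predecessor search. The only work beyond the fast forwards inside each query is $O(1)$ per step.

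The main step is bounding the total time of these $n$ consecutive move queries. Since $\LF$ is formed from a single cycle, as noted in Section~\ref{sec:rlbwt}, the $n$ consecutive queries from $i_0$ visit every position exactly once. Corollary~\ref{cor:amortized} therefore bounds the total number of fast forwards by $n(c+1)$, which is $O(n)$. Together with the $O(r)$ construction time, which is $O(n)$ since $r\le n$, the total time is $O(n)$, and this is optimal because $T$ has length $n$.

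The working space is the length capped structure plus the per-interval characters, both $O(r)$ words, excluding the output $T$. The only subtlety to check carefully is the per-interval character lookup: it relies on the fact that splitting intervals preserves run boundaries.
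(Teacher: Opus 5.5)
Your proof is correct and follows the paper's argument: build the length capped \LF move structure via Lemma~\ref{lem:lf} in $O(r)$ time, then charge the $n$ consecutive single-cycle move queries to Corollary~\ref{cor:amortized}. The per-interval character column (valid because capping only splits \BWT runs) and starting at the \$ row rather than row $0$ are fine elaborations of details the paper leaves implicit.

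There is one small indexing slip. If you pre-write $T[n-1]=\$$ and then, after each of the $n-1$ queries, emit the character of the row you moved \emph{out of}, the first query re-emits \$ and $T[0]$ is never written. Emit the character of the row you arrive \emph{at} instead.
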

\begin{proof}
    We build a length capped move structure for $\LF$ (or, alternatively, $\FL$) from Lemma~\ref{lem:lf} in $O(r)$-time and space. Inverting the \BWT requires performing $n$ consecutive move query steps starting from $i=0$ (or run 0 with offset 0). Therefore, we apply Corollary~\ref{cor:amortized} to perform the inversion in $O(n)$-time and $O(r)$ working space. Hence, the entire procedure requires $O(n)$-time and $O(r)$-space in addition to the result $T[0..n-1]$.
\end{proof}

The analogous process for $\phi$ and $\phi^{-1}$ is iterating in \textit{lexicographic order}. For example, enumerating the suffix array ($\SA[0..n-1]$) can be done by performing a full traversal of $\phi^{-1}$ using $n$ consecutive move queries. The document array ($\DA[0..n-1]$) can also be iterated in this fashion so long as the respective document of $\phi^{-1}$ intervals is sampled alongside the move structure; it is trivial to do this sampling in the time and space bounds of constructing a $\phi^{-1}$ move structure itself. This is assuming that documents are concatenated with unique separator characters so that no two $\phi^{-1}$ intervals can contain multiple documents. Therefore, we can also achieve the optimal bounds for listing these arrays in sequential order in $O(r)$-space.

\begin{theorem}
    Given $\RLBWT[0..r-1]$ corresponding to $\BWT[0..n-1]$, the suffix array $\SA[0..n-1]$ can be enumerated in optimal $O(n)$-time and $O(r)$-space.
\end{theorem}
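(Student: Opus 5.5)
We construct a length capped move structure for $\phi^{-1}$ by Lemma~\ref{lem:phi}, using $O(n)$-time and $O(r)$-space. Recall that $\phi^{-1}(\SA[i])=\SA[(i+1)\bmod n]$. Starting from $\SA[0]$, $n$ consecutive applications of $\phi^{-1}$ therefore output $\SA[0],\SA[1],\dots,\SA[n-1]$ in order. Since $\phi^{-1}$ is formed from a single cycle, Corollary~\ref{cor:amortized} bounds these $n$ consecutive move queries by $O(n)$ total time. The output is written sequentially and is not counted toward working space.

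The technical point is that a move query needs both the position and the rank of its predecessor in the interval set $S'$. We therefore need $\SA[0]$ and its interval rank at the start, and absolute values at each step. Note that $\SA[0]=n-1$, the position of the terminator $\$$, because the suffix $\$$ is lexicographically smallest. So $\SA[0]$ is known from $n$, which is $\sum_j \RLBWT[j].\ell$. Its predecessor rank is $r'-1$: the last interval, since $n-1$ is the largest position. If relative positions are used, the offset is $n-1$ minus the start of the last interval.

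For absolute positions we use the absolute representation $S'$ together with $S'_\pi$ (or $S'_\Delta$). Then each move query returns $\phi^{-1}(x)=S'_\pi[j]+(x-S'[j])$ directly and in $O(1)$ extra work. Alternatively, with relative positions, we keep a running absolute value updated by the stored interval start. Either way each step emits $\SA[i]$ in constant time beyond the fast forwards.

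The main thing to check is that Lemma~\ref{lem:phi} gives a structure whose queries return absolute text positions in $O(1)$ per step plus fast forwards. I would argue this by storing the $O(r')$ interval starts explicitly. Length capping only adds $O(r)$ intervals, so this fits in $O(r)$ words, and the sampled values are available from the $O(n)$-time construction of Lemma~\ref{lem:un_phi}. Summing the $O(n)$ construction time and the $O(n)$ traversal time gives $O(n)$ total time in $O(r)$ working space. This is optimal, since the output alone has size $n$.
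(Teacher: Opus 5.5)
Your proposal is correct and follows the paper's proof. The paper builds the length capped $\phi^{-1}$ move structure via Lemma~\ref{lem:phi} in $O(n)$ time and $O(r)$ space, then performs $n$ consecutive move queries and invokes Corollary~\ref{cor:amortized} for $O(n)$ total time. Your additional details are correct and make explicit what the paper leaves implicit: starting from $\SA[0]=n-1$ with predecessor rank $r'-1$, and keeping absolute interval starts so that each step emits an $\SA$ value directly.
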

\begin{proof}
    We build a length capped move structure for $\phi^{-1}$ (or, alternatively, $\phi$) from Lemma~\ref{lem:phi} in $O(n)$-time and $O(r)$-space. Enumerating the suffix array $\SA[0..n-1]$ requires $n$ consecutive move queries of $\phi^{-1}$; we apply Corollary~\ref{cor:amortized} to achieve the entire procedure in $O(n)$-time and $O(r)$-space.
\end{proof}

\begin{corollary}
    Given $\RLBWT[0..r-1]$ corresponding to $\BWT[0..n-1]$, the document array $\DA[0..n-1]$ can be enumerated in optimal $O(n)$-time and $O(r)$-space.
\end{corollary}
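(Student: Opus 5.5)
We build the length capped move structure for $\phi^{-1}$ (or $\phi$) exactly as in the proof of the suffix array enumeration theorem, using Lemma~\ref{lem:phi} in $O(n)$-time and $O(r)$-space. During this construction we also attach to every interval of the (unbalanced, then length capped) structure the document containing the text position at the start of that interval. With this sample, $\DA$ values can be read off during the traversal instead of computed separately.

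The key steps are as follows. First, we observe that the intervals of $\phi^{-1}$ are contiguous ranges of text positions, and they are permuted contiguously. Because documents are concatenated with unique separator characters, each separator is distinct from every other character, so a \BWT run boundary (and hence a $\phi^{-1}$ interval boundary) occurs around each separator. Consequently, no interval of $\phi^{-1}$ spans two documents. Length capping only splits intervals further, so this property is preserved.

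Second, we perform the sampling. When the method of Sanaullah et al.~\cite{sanaullah2026rlbwt} (via Lemma~\ref{lem:un_phi}) iterates over the text with $\LF$, it visits every text position in order. We therefore maintain a counter of the current document, decremented whenever a separator is passed, and record this document for each interval start as it is sampled. This adds $O(1)$ time per step and $O(r)$ space. When an interval is split by length capping, the new piece simply inherits the document of its parent.

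Third, we enumerate. Starting from the position $\SA[0]$ (the position of $\$$), we perform $n$ consecutive move queries of $\phi^{-1}$, which visit $\SA[0],\SA[1],\dots,\SA[n-1]$ in lexicographic order. At each step the current interval index is known, so we output its stored document as $\DA[i]$. By Corollary~\ref{cor:amortized} the total time is $O(n)$ and the working space is $O(r)$.

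The main obstacle is the first step: justifying that interval boundaries separate documents. I would argue that the text positions immediately following a separator, and the separator positions themselves, must start new runs of $\phi^{-1}$. The reason is that a unique separator forces the preceding \BWT character context to differ, so the contiguity condition $\phi^{-1}(j-1)=\phi^{-1}(j)-1$ fails at those positions. If this argument proves delicate, the fallback is to explicitly add the $d\le r$ document boundaries to $S$. This costs only $O(r)$ extra intervals and preserves every bound above.
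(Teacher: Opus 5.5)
Your overall route is the paper's: build the length capped $\phi^{-1}$ structure of Lemma~\ref{lem:phi}, store one document label per interval, and read $\DA$ off during the $n$ consecutive move queries of Corollary~\ref{cor:amortized}. The paper justifies the labelling only by asserting that unique separators keep every $\phi^{-1}$ interval inside one document. Your attempt to prove that assertion is false as stated.

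Take $T=\mathtt{A}\#_1\mathtt{A}\#_2\mathtt{A}\#_3\$$ with $\$<\#_1<\#_2<\#_3<\mathtt{A}$. Then $\SA=[6,1,3,5,0,2,4]$, $\BWT=\#_3\mathtt{AAA}\$\#_1\#_2$, and $\phi^{-1}(x)=(x+2)\bmod 7$. With $S$ defined as in the preliminaries (positions where $\pi(i-1)\neq\pi(i)-1$), there are only two intervals, $[0,4]$ and $[5,6]$, and $[0,4]$ meets all three documents. Neither the positions $2,4$ following separators nor the separator positions $1,3$ are breakpoints. The reason is that a \BWT run end $q$ only makes $\SA[q]$ a \emph{candidate} breakpoint of $\phi^{-1}$. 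If $\BWT[q]$ is the last occurrence of its symbol and $\BWT[q+1]$ is the first occurrence of the next symbol in the alphabet, then $\LF(q+1)=\LF(q)+1$ and contiguity survives. Consecutive singleton separator runs produce exactly this.

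So you genuinely need your fallback, and it is cleanest to state it as follows. Use as interval starts the samples $\SA[q]$ at \emph{every} \BWT run end $q$; there are at most $r$ of them, and an RLBWT-driven sampling naturally produces them. This is a valid, possibly non-maximal, decomposition, since any $S'\supseteq S$ works for a move structure. Equivalently, add the $d\le r$ document starts to $S$ explicitly. Since each separator forms a singleton run, the position after every separator is then an interval start. Hence each interval lies in one document once each separator is attributed to the document it terminates, and length capping only refines this. With that in place, your second and third steps go through as written and coincide with the paper's argument: the document counter maintained during the \LF traversal, inheritance of labels under splitting, and the $O(n)$ enumeration from $\SA[0]$.
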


Finally, we note that in practice the algorithms of this section are useful for streaming and/or external memory approaches. That is, inverting a \BWT in $O(r)$ working space can easily be modified to write the result character by character to disk instead of holding $T[0..n-1]$ in memory during computation. Similarly, enumerating the suffix array or document array is useful for approaches which do not require writing the full result. For example, Shivakumar and Langmead's Mumemto~\cite{shivakumar2025mumemto} tool computes queries such as \textit{multi-maximal unique matches} and \textit{maximal exact matches} with respect to a collection of documents by streaming $\BWT[i]$, $\SA[i]$, $\DA[i]$ and $\LCP[i]$ in order from $i=0$ to $i=n-1$. In this capacity, amortizing over $n$ consecutive move queries using length capping achieves optimal results while being space efficient in practice. Unfortunately, the best known $\LCP$ enumeration algorithm in $O(r)$-space requires a balanced move structure~\cite{sanaullah2026rlbwt} and hence a length capped move structure cannot be applied to improve the result.

\section{Experimental Results}
\subsection{Orbit Library}

We explore length capping through the \texttt{Orbit} move structure library featuring the results of Theorem~\ref{thm:amortized}, available at \url{https://github.com/drnatebrown/orbit}. Inspired by the optimizations of Bertram et al.'s \texttt{Move-r}~\cite{bertram2024move} and Depuydt et al.'s \texttt{b-move}, our move structure is built over a bit packed matrix. That is, we use bitpacking to support the components of the move structure as columns which use the minimum possible fixed bitwidth, while keeping rows contiguous in memory to support cache efficiency. This is essential to length capping, since it gives the maximum benefit to the bounded size of any elements in $S_\ell$ or $S_\Delta$. Most importantly, we accept the length capping parameter $c$ as input to construction.

Beyond a proof of concept for length capping, \texttt{Orbit} is designed as a header only library that can be inserted into existing software. As such, it supports various representations of a move structure. For absolute positions, the components $S$, $S_\Delta$ and $\SPRED$ are stored in a bitpacked matrix using roughly $r \log r + 2 r \log n$ bits (we do not use sparse bitvectors, optimizing instead for speed). Alternatively, relative positions are supported by storing the components $S_\ell$, $S_\Delta$, and $\SPRED$ in roughly $r \log r + 2 r \log \frac{n}{r}$ bits as described in Section~\ref{sec:other}. Move queries support using either a traditional linear scan over fast forwards or exponential search as in Corollary~\ref{cor:exp}.

Within the library are specialized move structures for supporting the \RLBWT based permutations \LF, \FL, $\phi$, and $\phi^{-1}$. These form the basis needed to perform queries such as \textit{count} and \textit{locate} as originally described by Nishimoto and Tabei~\cite{nishimoto2021optimal} when using move structures to construct a full text index for pattern matching. We do not provide an implementation for text indexing, but have designed the implementation such that additional columns can be added to the move structure, such as the \BWT character for \LF. These can either be bitpacked alongside move structure columns for cache efficiency or stored separated in their own dedicated memory. This makes \texttt{Orbit} well suited to be used in various scenarios as a flexible data structure library to construct larger and more intricate applications requiring a move structure.

\subsection{Setup}

To evaluate the efficiency of length capping and its implementation in practice, we evaluate it against approaches which use balancing for the \RLBWT based permutations $\LF$ and $\phi$. To our knowledge, the only known implementation of a generic move structure library supporting balancing is \texttt{Move-r}~\cite{bertram2024move}. However, their approach performs some fixed length splitting that cannot be turned off. For that reason, we also use the \texttt{r-permute} tool\footnote{\url{https://github.com/drnatebrown/r-permute}} to obtain balanced intervals for \LF without any length splitting. Unfortunately, it does not support arbitrary balancing and thus we cannot do the same for $\phi$.

To generate the permutations for \LF and $\phi$, we computed the \RLBWT of $16, 32, 64, 128, 256$ and $1000$ concatenated haplotypes of human chromosome-19 (chr19) used in the experiments of Boucher et al.~\cite{boucher2021phoni}. The specifics of these collections are described in Table~\ref{tab:dataset}. For \LF, we perform the move queries involved in \BWT inversion, and use relative positions (i.e., using $S_\ell$). For $\phi$, we enumerate the suffix array, which requires absolute positions (i.e., using $S$). Both of these processes involve exactly $n$ distinct move queries and thus benefit from length capping. Construction and query time were measured using GNU time on a server with an Intel(R) Xeon(R) Gold 6248R CPU running at 3.00 GHz with 48 cores and 1.5TB DDR4 memory. We report build times and peak memory usage for \texttt{Orbit} across length capping parameters and \texttt{Move-r} across balancing parameters; we do not measure for other modes since they involve interfacing between implementations, whereas these modes are encapsulated by the listed tool.

\begin{table}[!ht]%
    \centering
    \begin{tabular}{l|rrrrrrr}
        \toprule
         & \multicolumn{7}{c}{\textbf{Number of concatenated chr19 sequences ($d$)}} \\
        & 16 & 32 & 64 & 128 & 256 & 512 & 1,000 \\
        \midrule
        $\;n/10^6$ & 946.01 & 1,892.01 & 3,784.01 & 7,568.01 & 15,136.04 & 30,272.08 & 59,125.12 \\
        $\;r/10^4$ & 3,240.02 & 3,282.51 & 3,334.06 & 3,405.40 & 3,561.98 & 3,923.60 & 4,592.68 \\
        $\;n/r$    & 29.20 & 57.64 & 113.50 & 222.24 & 424.93 & 771.54 & 1,287.38 \\
        \bottomrule
    \end{tabular}
    \bigskip
    \caption{Table summarizing the collections of chr19 datasets across various number of documents $d$. Here, $n$ is the size of the \BWT for that collection and $r$ the number of \BWT runs. Each collection is a superset of the previous.}
    \label{tab:dataset}
\end{table}

\subsection{Exploring Interval Splitting}
\label{sec:splits}

We use $16$ copies of chr19 to explore the effect of various parameters for length capping and balancing. Timings in this section are averaged over 5 iterations. To evaluate $\LF$, we compared metrics using one thread for construction and query using:
\begin{itemize}
    \item \texttt{Orbit} on an unbalanced move structure.
    \item \texttt{Orbit} balanced using $\alpha \in \{2,4,8,16\}$ with \texttt{r-permute}. Values larger than $16$ are identical to an unbalanced move structure.
    \item \texttt{Orbit} length capped using $c \in \{0.5,1,2,4,8,16,32\}$.
    \item \texttt{Move-r} balanced using $\alpha \in \{2,4,8,16\}$.We note that \texttt{Move-r} length splits first through binning into preset bitwidths.
    \item \texttt{Orbit} using pairs $(c,\alpha)\in \{(1,2),(2,4),(4,8),...,(32,64)\}$ by length capping first with \texttt{Orbit}, then passing this result to \texttt{Move-r} to balance the length capped intervals. This result is then loaded back into \texttt{Orbit}.
    \item \texttt{Move-r} using the same procedure as above using $(c,\alpha)$ pairs, but this time loaded into \texttt{Move-r}.
\end{itemize}
Note that we chose to use \texttt{Move-r} balancing instead of \texttt{r-permute} for the combined case of length capping and balancing so that we could use the same underlying move structure. This allows a direct comparison of \texttt{Orbit} and \texttt{Move-r} to evaluate the benefit of bitpacking when using both optimizations.

The results of the experiment are shown in Table~\ref{tab:lf}. We observe that balancing alone has little effect compared to an unbalanced move structure; this result has been observed in other studies for \LF, which seems to have good practical query time even unbalanced~\cite{brown2022rlbwt}. However, length capping results in the fastest and smallest approaches, achieving a space reduction of $\sim 46\%$ compared to the unbalanced version. When directly compared to \texttt{Move-r} when using both balancing and length capping, \texttt{Orbit} is faster and smaller, although speed improvements are slight. 

Although it is obvious from our theory why length capping results should be smaller, speed improvements in length capping may also be due to the smaller representation causing more rows of the move structure to fit in cache. Thus, our experiments are too limited to imply that \texttt{Orbit} itself is a faster implementation than \texttt{Move-r}, but does imply the benefit of length capping itself empirically. Construction times are improved over balancing, showing the theoretical improvement gained by length capping. Construction memory is also smaller for reasonable choices of $c$ and $\alpha$, due to the simplicity of splitting by length when compared with balancing. Overall, the best query times combine both length capping and balancing.

\begin{table}[ht!]
\centering
\caption{Results of length capping (with parameter $c$) and balancing (with parameter $\alpha$) for \LF. K denotes thousands. Certain formulations are highlighted in red, yellow, and green to observe comparisons of the methods. Overall best construction time and memory as well as query size and time are bolded. Size refers to disk size of the corresponding data structure. Time is averaged over all move queries.}
\label{tab:lf}
\begin{tabular}{l|rrrrrrrr}
\toprule
\makecell[l]{~\\Structure}  & \makecell[r]{~\\$\alpha$} & \makecell[r]{\\~$c$} & \makecell[r]{Build\\Time (s)} & \makecell[r]{Build\\Mem. (MB)} & \makecell[r]{~\\\# Intervals}    & \makecell[r]{Size\\(MB)}       & \makecell[r]{Avg.\\Time (ns)} &  \\
\midrule
\midrule
\rowcolor{red!20}\texttt{Orbit} & - & -              & 1.11   & 571.03         & 32,400K      & 311.85       & 98.34            &  \\
\midrule
\texttt{Orbit} & 2 &  -  & -  & -          & 33,579K       & 327.39       & 98.79            &  \\
\texttt{Orbit} & 4 & -  & -   & -         & 32,537K      & 313.17       & 99.18            &   \\
\texttt{Orbit} & 8 & -   & -  & -          & 32,430K       & 312.04       & 99.07            &  \\
\rowcolor{red!20} \texttt{Orbit} & 16 & -  & -   & -         & 32,402K      & 311.87       & 98.09            &   \\
\midrule
\texttt{Orbit} & - & 0.5      & 3.14 & 848.23     & 90,473K       & 395.82       & 97.50         &  \\
\texttt{Orbit} & - & 1       & 1.60  & 635.51     & 49,371K       & 222.17      & 95.13         &  \\
\texttt{Orbit} & - & 2        & 1.30 & 573.16    & 36,907K       & 175.30       & 90.36         &  \\
\texttt{Orbit} & - & 4        & 1.16 & 558.00     & 33,856K       & 169.28       & 86.55         &  \\
\rowcolor{green!20} \texttt{Orbit} & - & 8         & \bftab{1.13}  & \bftab{553.23}    & 32,922K       & \bftab{168.73}       & 85.99         &  \\
\texttt{Orbit} & - & 16       & 1.13   & 584.20   & 32,582K       & 175.13       & 86.54         &  \\
\texttt{Orbit} & - & 32       & 1.13  & 583.45   & 32,465K       & 182.61       & 87.61         &  \\
\midrule
\texttt{Move-r} & 2 & -      & 5.55 & 1214.15     & 33,948K      & 305.53       & 89.24            \\
\texttt{Move-r} & 4 & -     & 4.66  & 957.66    & 32,988K       & 296.89       & 89.61           \\
 \texttt{Move-r} & 8 & -     & 4.49 & 785.82     & 32,923K       & 296.31       & 90.00            \\
\rowcolor{yellow!20} \texttt{Move-r} & 16 & -    & 4.48   & 785.76   & 32,922K       & 296.30       & 89.62            \\
\midrule
\texttt{Orbit} & 2 & 1 & - & -           & 50,184K       & 225.83       & 92.60          \\
\texttt{Orbit} & 4 & 2 & - & -           & 37,293K       & 177.14       & 84.55         \\
\texttt{Orbit} & 8 & 4 & - & -           & 34,022K       & 170.10       & 82.95         \\
\texttt{Orbit} & 16 & 8 & - & -           & 32,998K       & 169.12       & 84.33        \\
\rowcolor{green!20} \texttt{Orbit} & 32 & 16 & - & -           & 32,968K       & 168.96       & \bftab{82.84}       \\
\texttt{Orbit} & 64 & 32 & - & -           & 32,951K       & 168.87       & 82.89       \\
\midrule
\texttt{Move-r}& 2 & 1  & - & -           & 50,184K       & 451.66       & 95.68         \\
\texttt{Move-r}& 4 & 2  & - & -           & 37,293K       & 335.64       & 90.45         \\
\texttt{Move-r}& 8 & 4    & - & -           & 34,022K       & 306.19       & 89.28        \\
\texttt{Move-r}& 16 & 8   & - & -            & 32,998K       & 296.99       & 88.89      \\
\rowcolor{yellow!20} \texttt{Move-r}& 32 & 16  & -    & -        & 32,968K       & 296.71       & 90.52      \\
\texttt{Move-r}& 64 & 32  & -  & -           & 32,951K       & 296.56       & 90.14     \\
\bottomrule
\end{tabular}
\end{table}

The equivalent experiment using the $\phi$ permutation was performed with slight changes. First, we cannot obtain a balanced move structure without the implicit length splitting of \texttt{Move-r}, since this was only available for $\LF$. Second, the $\phi$ permutation permits higher $\alpha$ values before stagnation than $\LF$. The results of the experiment are shown in Table~\ref{tab:phi}. We see that unbalanced move structure queries for $\phi$ performs $\sim10$ times slower or worse when compared to splitting approaches. Further, the best query approaches involve some form of balancing, with the very best approaches being a combination of length capping and balancing. This implies that, in contrast to \LF, in practice $\phi$ does not have good query time when left unbalanced, meaning even just the length capping guarantees result in a large performance improvement. 

When directly compared to \texttt{Move-r} when balancing and length capping, query speed improvements of \texttt{Orbit} can be quite small, seemingly correlating with a slight space improvement. However, both the fastest and smallest approaches for queries use \texttt{Orbit} formulations. Note that only \texttt{Orbit} can make proper use of length capping, since its data structures bitpack to fit. Thus, the smaller space improvement could be attributed to requiring absolute positions rather than relative, which limits the benefit of bounding interval lengths. Regardless, length capping is still a clear query speed improvement when compared to an unbalanced move structure. We also again see that construction times for length capping are faster than balancing, with memory also smaller for reasonable choice of $c$ and $\alpha$.

\begin{table}[ht!]
\centering
\caption{Results of length capping (with parameter $c$) and balancing (with parameter $\alpha$) for $\phi$. K denotes thousands. Certain formulations are highlighted in red, yellow, and green to observe comparisons of the methods. Overall best construction time and memory as well as query size and time are bolded. Size refers to disk size of the corresponding data structure. Time is averaged over all move queries.}
\label{tab:phi}
\begin{tabular}{l|rrrrrrrr}
\toprule
\makecell[l]{~\\Structure}  & \makecell[r]{~\\$\alpha$} & \makecell[r]{\\~$c$} & \makecell[r]{Build\\Time (s)} & \makecell[r]{Build\\Mem. (MB)} & \makecell[r]{~\\\# Intervals}    & \makecell[r]{Size\\(MB)}       & \makecell[r]{Avg.\\Time (ns)} &  \\
\midrule
\midrule
\rowcolor{red!20}\texttt{Orbit} & - & -                & 2.15  & 538.63                & 32,400K       & 311.85       & 1,464.09         &  \\
\midrule
\texttt{Orbit} & - & 0.5                       & 6.49 & 1156.55           & 91,621K      & 698.61      & 101.84         &  \\
\texttt{Orbit} & - & 1                         & 5.67  & 768.72          & 60,885K       & 464.25      & 97.23         &  \\
\texttt{Orbit} & - & 2                         & 5.26 & 590.44          & 46,323K       & 359.01      & 96.76         &  \\
\rowcolor{yellow!20}\texttt{Orbit} & - & 4     & 5.10 & 552.52           & 39,228K       & 308.91       & 86.49         &  \\
\texttt{Orbit} & - & 8                       & \bftab{4.92}  & \bftab{535.15}          & 35,728K       & 285.82       & 92.65         &  \\
\texttt{Orbit} & - & 16                        & 5.16  & 560.24          & 33,997K       & 276.23      & 103.59         &  \\
\texttt{Orbit} & - & 32                       & 5.19  & 555.20         & 33,146K       & \bftab{269.31}       & 130.74         &  \\
\midrule
\texttt{Move-r}& 2 & -                     & 21.68   & 2079.67        & 64,490K       & 580.41      & 67.31         &  \\
\texttt{Move-r}& 4 & -                      & 15.49  & 1439.52          & 44,902K       & 404.12       & 59.72         &  \\
\rowcolor{green!20} \texttt{Move-r}& 8& -  & 13.82  & 1258.34          & 39,360K       & 354.24       & 58.73         &  \\
\texttt{Move-r} &16 & -                    & 13.09  & 1187.88          & 37,265K       & 335.38       & 59.48          &  \\
\texttt{Move-r} & 32 & -                    & 12.74 & 1158.03           & 36,304K       & 326.73      & 61.22          &  \\
\texttt{Move-r} & 64 & -                     & 11.95& 1143.90            & 35,854K       & 322.68       & 68.55         & \\
\midrule
\texttt{Orbit} &2&1                & -   & -         & 82,748K       & 641.30       & 90.9            &  \\
\texttt{Orbit} &4&2                & -   & -         & 54,585K       & 423.03       & 82.30         &  \\
\texttt{Orbit} &8&4                & -   & -        & 43,006K       & 338.67       & 69.34         &  \\
\rowcolor{green!20}\texttt{Orbit} & 16 & 8              & -  & -         & 37,577K       & 300.62       & \bftab{53.82}         &  \\
\texttt{Orbit}&32&16              & -    & -      & 36,605K       & 292.84      & 60.40         &  \\
\midrule
\texttt{Move-r}&2&1                 & -   & -         & 82,748K       & 744.74       & 88.55         &  \\
\texttt{Move-r}&4&2                  & -  & -          & 54,585K       & 491.26       & 81.22         &  \\
\texttt{Move-r}&8&4                & -    & -         & 43,006K       & 387.05       & 70.39         &  \\
\rowcolor{green!20}\texttt{Move-r}& 16 & 8   &-              & -            & 37,577K       & 338.20       & 59.16         &  \\
\texttt{Move-r}&32&16               & -   &-         & 36,605K       & 329.44       & 61.36         &  
\end{tabular}
\end{table}

\subsection{Splitting at Scale}

To evaluate splitting approaches for larger collections and as $n/r$ grows, we perform $\LF$ inversion move query steps using \texttt{Orbit}. Overall, we compared an unbalanced move structure, length capping with $c=4$, balancing with $\alpha=8$, and the combination with $c=4$ and $d=8$. These parameters were selected due to their strong performance in both experiments of Section~\ref{sec:splits}.

Results for splitting at scale are shown in Figure~\ref{fig:scale}. Methods which perform length capping are consistently faster than those that do not, with only a slight improvement from balancing when compared to unbalanced. The time per query for length capping and the combination of splitting approaches is comparable with no decisive best approach. With respect to space, length capping approaches are at least $\sim40\%$ smaller than the unbalanced or only balanced move structures. This trend is consistent across varying collections of chr19 sequences.

\begin{figure}[ht!]
    \centering
    \includegraphics[width=\linewidth]{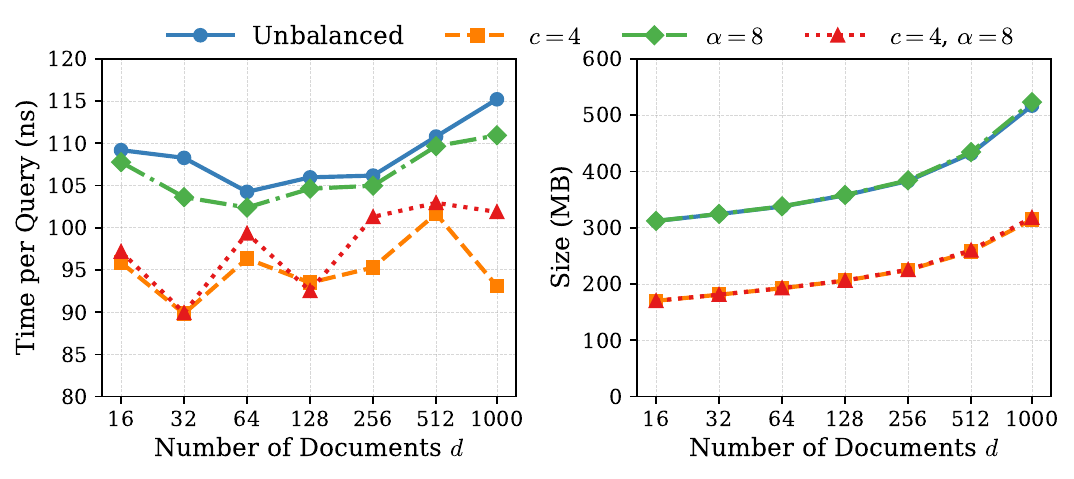}
    \caption{\textbf{Left:} The time per move query in nanoseconds across collections of chr19 haplotypes. \textbf{Right:} The respective disk size in megabytes for the corresponding move structures using \texttt{Orbit}.}
    \label{fig:scale}
\end{figure}

\section{Conclusion}

In this work, we showed an additional splitting regime which achieves average time guarantees with respect to move queries and amortized time guarantees for single cycle permutations while obtaining a $O(r)$-time construction superior to that of balancing. This can be leveraged in many \RLBWT permutations such as \LF and $\phi$ to obtain new optimal-time results in \BWT inversion and \SA enumeration. Further, length capping achieves an $O(r \log r)$-bit space reduction of arbitrary move structures by replacing all $O(r \log n)$-bit components. Finally, we provide a new worst case move query bound for unbalanced move structures of $O(\log \frac{n}{r})$-time.

In practice, length capping shows speed improvements when compared to the unbalanced move structure. It does not always beat balancing alone, such as on $\phi$; however, a combination of balancing and length capping gives the fastest result. Further, theoretical construction time improvements are observed in practice as well as a smaller construction memory footprint. Space improvements are large for $\LF$, although smaller but still observed for $\phi$. This suggests length capping especially useful when using relative positions when the implementations uses bitpacking.

In essence, length capping can be seen as ``fixing'' a distribution whose values sum to at most $n$ even if a single value can be $O(n)$ itself; this is true of both $S_{\ell}$ and $S_\Delta$. It is likely that such an approach could be adopted in other such scenarios. This result could also be combined with Elias-Fano encoding, since lengths store the lower $\log \frac{n}{r}$ bits of elements in $S$ in a cache efficient manner which could be augmented with a bitvector for the higher bits. Further, the unique properties of \RLBWT based permutations allow efficient unbalanced move structure construction. It would seem that most runny permutations would have similar exploitable structures, since understanding their underlying properties appears necessary to derive their function in the first place.

Though usefully applied to streaming/iteration of \RLBWT values, length capping does not yet solve \LCP enumeration. Either an alternative algorithm or improved balancing approaches should be pursued as the \LCP array remains valuable for various advanced match queries. There may also be many other applications for the improved construction bound of length capped move structures with average query time guarantees. To date, there has not been much work on permutations outside an \RLBWT framework for move structures.

The \texttt{Orbit} library proved efficient in practice across various experiments. However, some results show that even if length capping alone is powerful, more desirable results can be achieved by combining with balancing. To promote a powerful feature filled library, the \texttt{Orbit} tool should look at interfacing with the balancing algorithm of \texttt{Move-r} or develop its own implementation.

In summary, length capping is both a useful theoretical and practical result, providing a simpler alternative to balancing whilst still providing average case bounds. This advances applications in \RLBWT permutations and motivates the flexible move structure library \texttt{Orbit}. Since we observed no major downsides to performing such splitting, this result should become commonplace when analyzing or implementing move structures going forward.

\bibliography{refs}


\end{document}

%% file: Figs/example.tex
\usetikzlibrary{matrix,fit,positioning,backgrounds,arrows.meta}

\begin{tikzpicture}[>=Stealth]

\matrix (m) [matrix of nodes, nodes in empty cells,
             column sep=1mm, row sep=0.5mm,
             nodes={minimum width=6mm, minimum height=6mm, anchor=center},
             column 2/.style={text=gray}] 
{
    \textbf{$i$} & \textcolor{black}{$j$} & \textbf{$\pi(i)$} \\ 
    0 & 0 & 1 \\
    1 &   & 2 \\
    2 & 1 & 9 \\
    3 &   & 10 \\
    4 &   & 11 \\
    5 & 2 & 3 \\
    6 & 3 & 12 \\
    7 &   & 13 \\
    8 & 4 & 4 \\
    9 &   & 5 \\
    10 &5 & 14 \\
    11 &6 & 0 \\
    12 &7 & 15 \\
    13 &8 & 6 \\
    14 &  & 7 \\
    15 &  & 8 \\
};

\begin{pgfonlayer}{background}
    \node[fit=(m-2-3)(m-3-3), fill=red!20, inner sep=0pt, outer sep=0pt] {};
    \node[fit=(m-4-3)(m-6-3), fill=green!20, inner sep=0pt, outer sep=0pt] {};
    \node[fit=(m-7-3)(m-7-3), fill=blue!20, inner sep=0pt, outer sep=0pt] {};
    \node[fit=(m-8-3)(m-9-3), fill=orange!20, inner sep=0pt, outer sep=0pt] {};
    \node[fit=(m-10-3)(m-11-3), fill=purple!20, inner sep=0pt, outer sep=0pt] {};
    \node[fit=(m-12-3)(m-12-3), fill=yellow!30, inner sep=0pt, outer sep=0pt] {};
    \node[fit=(m-13-3)(m-13-3), fill=cyan!20, inner sep=0pt, outer sep=0pt] {};
    \node[fit=(m-14-3)(m-14-3), fill=brown!20, inner sep=0pt, outer sep=0pt] {};
    \node[fit=(m-15-3)(m-17-3), fill=magenta!20, inner sep=0pt, outer sep=0pt] {};
\end{pgfonlayer}

\matrix (b) [matrix of nodes, nodes in empty cells,
             right=20mm of m, 
             row sep=0.5mm,
             nodes={minimum width=6mm, minimum height=6mm, anchor=center}] 
{
    \\ 
    0 \\ 1 \\ 2 \\ 3 \\ 4 \\ 5 \\ 6 \\ 7 \\ 8 \\ 9 \\ 10 \\ 11 \\ 12 \\ 13 \\ 14 \\ 15 \\
};

\draw[->, thick] (m-2-3.east) -- (b-3-1.west);   
\draw[->, thick] (m-4-3.east) -- (b-11-1.west);  
\draw[->, thick] (m-7-3.east) -- (b-5-1.west);   
\draw[->, thick] (m-8-3.east) -- (b-14-1.west);  
\draw[->, thick] (m-10-3.east) -- (b-6-1.west);  
\draw[->, thick] (m-12-3.east) -- (b-16-1.west); 
\draw[->, thick] (m-13-3.east) -- (b-2-1.west);  
\draw[->, thick] (m-14-3.east) -- (b-17-1.west);  
\draw[->, thick] (m-15-3.east) -- (b-8-1.west);  

\matrix (s) [matrix of nodes, nodes in empty cells,
             right=25mm of b, 
             row sep=1.5mm,
             nodes={minimum width=12mm, minimum height=6mm, anchor=center}] 
{
    $j$ & \textbf{$S$} & \textbf{$S_\pi$} & \textbf{$\SPRED$} \\ 
    0 & 0 & 1 & 0 \\
    1 & 2 & 9 & 4 \\
    2 & 5 & 3 & 1 \\
    3 & 6 & 12 & 7 \\
    4 & 8 & 4 & 1 \\
    5 & 10 & 14 & 8 \\
    6 & 11 & 0 & 0 \\
    7 & 12 & 15 & 8 \\
    8 & 13 & 6 & 3 \\
};

\begin{pgfonlayer}{background}
    \node[fit=(b-3-1)(b-4-1), fill=red!20, inner sep=0pt, outer sep=0pt] {};
    \node[fit=(b-11-1)(b-13-1), fill=green!20, inner sep=0pt, outer sep=0pt] {};
    \node[fit=(b-5-1)(b-5-1), fill=blue!20, inner sep=0pt, outer sep=0pt] {};
    \node[fit=(b-14-1)(b-15-1), fill=orange!20, inner sep=0pt, outer sep=0pt] {};
    \node[fit=(b-6-1)(b-7-1), fill=purple!20, inner sep=0pt, outer sep=0pt] {};
    \node[fit=(b-16-1)(b-16-1), fill=yellow!30, inner sep=0pt, outer sep=0pt] {};
    \node[fit=(b-2-1)(b-2-1), fill=cyan!20, inner sep=0pt, outer sep=0pt] {};
    \node[fit=(b-17-1)(b-17-1), fill=brown!20, inner sep=0pt, outer sep=0pt] {};
    \node[fit=(b-8-1)(b-10-1), fill=magenta!20, inner sep=0pt, outer sep=0pt] {};
\end{pgfonlayer}

\begin{pgfonlayer}{background}
    \node[fit=(s-2-2)(s-2-4), fill=red!20, inner sep=0pt, outer sep=0pt] {};
    \node[fit=(s-3-2)(s-3-4), fill=green!20, inner sep=0pt, outer sep=0pt] {};
    \node[fit=(s-4-2)(s-4-4), fill=blue!20, inner sep=0pt, outer sep=0pt] {};
    \node[fit=(s-5-2)(s-5-4), fill=orange!20, inner sep=0pt, outer sep=0pt] {};
    \node[fit=(s-6-2)(s-6-4), fill=purple!20, inner sep=0pt, outer sep=0pt] {};
    \node[fit=(s-7-2)(s-7-4), fill=yellow!30, inner sep=0pt, outer sep=0pt] {};
    \node[fit=(s-8-2)(s-8-4), fill=cyan!20, inner sep=0pt, outer sep=0pt] {};
    \node[fit=(s-9-2)(s-9-4), fill=brown!20, inner sep=0pt, outer sep=0pt] {};
    \node[fit=(s-10-2)(s-10-4), fill=magenta!20, inner sep=0pt, outer sep=0pt] {};
\end{pgfonlayer}

\end{tikzpicture}

%% file: refs.bib
@inproceedings{nishimoto2021optimal,
  title={Optimal-Time Queries on {BWT}-Runs Compressed Indexes},
  author={Nishimoto, Takaaki and Tabei, Yasuo},
  booktitle={48th International Colloquium on Automata, Languages, and Programming (ICALP 2021)},
  pages={101--1},
  year={2021},
  organization={Schloss Dagstuhl--Leibniz-Zentrum f{\"u}r Informatik}
}

@article{brown2022rlbwt,
  title={{RLBWT} tricks},
  author={Brown, Nathaniel K and Gagie, Travis and Rossi, Massimiliano},
  journal={LIPIcs: Leibniz international proceedings in informatics},
  volume={233},
  pages={16},
  year={2022}
}

@article{zakeri2024movi,
  title={Movi: a fast and cache-efficient full-text pangenome index},
  author={Zakeri, Mohsen and Brown, Nathaniel K and Ahmed, Omar Y and Gagie, Travis and Langmead, Ben},
  journal={Iscience},
  volume={27},
  number={12},
  year={2024},
  publisher={Elsevier}
}

@inproceedings{bertram2024move,
  title={Move-r: Optimizing the r-index},
  author={Bertram, Nico and Fischer, Johannes and Nalbach, Lukas},
  booktitle={22nd International Symposium on Experimental Algorithms (SEA 2024)},
  pages={1--1},
  year={2024},
  organization={Schloss Dagstuhl--Leibniz-Zentrum f{\"u}r Informatik}
}

@article {movi2preprint,
	author = {Zakeri, Mohsen and Brown, Nathaniel K. and Gagie, Travis and Langmead, Ben},
	title = {Movi 2: Fast and Space-Efficient Queries on Pangenomes},
	year = {2025},
	doi = {10.1101/2025.10.16.682873},
	publisher = {Cold Spring Harbor Laboratory},
	journal = {bioRxiv}
}

@article{manber1993suffix,
  title={Suffix arrays: a new method for on-line string searches},
  author={Manber, Udi and Myers, Gene},
  journal={SIAM Journal on Computing},
  volume={22},
  number={5},
  pages={935--948},
  year={1993},
}

@article{bwt,
  title={A Block-sorting Lossless Data Compression Algorithm},
  author={Burrows, Michael and Wheeler, David J},
  journal={Digital Equipment Corporation},
  numer={124},
  year={1994}
}

@inproceedings{muthukrishnan2002efficient,
  title={Efficient algorithms for document retrieval problems.},
  author={Muthukrishnan, Shanmugavelayutham},
  booktitle={SODA},
  volume={2},
  pages={657--666},
  year={2002},
}

@phdthesis{brown2023bwt,
  title={{BWT}-runs compressed data structures for pan-genomics text indexing},
  author={Brown, Nathaniel},
  year={2023},
  school={Dalhousie University}
}

@article{sanaullah2026rlbwt,
  title={{RLBWT}-Based {LCP} Computation in Compressed Space for Terabase-Scale Pangenome Analysis},
  author={Sanaullah, Ahsan and Brown, Nathaniel K and Shakya, Pramesh and Deegutla, Arun and Naseri, Ardalan and Langmead, Ben and Zhi, Degui and Zhang, Shaojie},
  journal={bioRxiv},
  pages={2026--01},
  year={2026},
  publisher={Cold Spring Harbor Laboratory}
}

@article{shivakumar2025mumemto,
  title={Mumemto: efficient maximal matching across pangenomes},
  author={Shivakumar, Vikram S and Langmead, Ben},
  journal={Genome Biology},
  volume={26},
  number={1},
  pages={169},
  year={2025},
  publisher={Springer}
}

@inproceedings{tatarnikov2023moni,
  title={{MONI} Can Find k-MEMs},
  author={Tatarnikov, Igor and Farahani, Ardavan Shahrabi and Kashgouli, Sana and Gagie, Travis},
  booktitle={34th Annual Symposium on Combinatorial Pattern Matching},
  year={2023}
}

@inproceedings{boucher2021phoni,
  title={{PHONI}: Streamed matching statistics with multi-genome references},
  author={Boucher, Christina and Gagie, Travis and Tomohiro, I and K{\"o}ppl, Dominik and Langmead, Ben and Manzini, Giovanni and Navarro, Gonzalo and Pacheco, Alejandro and Rossi, Massimiliano},
  booktitle={2021 Data Compression Conference (DCC)},
  pages={193--202},
  year={2021},
  organization={IEEE}
}

@article{rindex,
author = {Gagie, Travis and Navarro, Gonzalo and Prezza, Nicola},
title = {Fully Functional Suffix Trees and Optimal Text Searching in {BWT}-Runs Bounded Space},
year = {2020},
issue_date = {February 2020},
publisher = {Association for Computing Machinery},
address = {New York, NY, USA},
volume = {67},
number = {1},
issn = {0004-5411},
url = {https://doi.org/10.1145/3375890},
doi = {10.1145/3375890},
journal = {J. ACM},
month = jan,
articleno = {2},
numpages = {54}
}

@article{brown2024faster,
  title={Faster run-length compressed suffix arrays},
  author={Brown, Nathaniel K and Gagie, Travis and Manzini, Giovanni and Navarro, Gonzalo and Sciortino, Marinella},
  journal={arXiv preprint arXiv:2408.04537},
  year={2024}
}
